\numberwithin{equation}{section}
\newcommand{\ii}{{\rm i}}
\newcommand{\ee}{{\rm e}}
\newcommand{\unit}{1\!\!1}
\newcommand{\BbbR}{\mathbb{R}}
\newcommand{\BbbZ}{\mathbb{Z}}
\newcommand{\BbbC}{\mathbb{C}}
\newcommand{\W}{\mathcal{W}}
\newcommand{\What}{\widehat{\mathcal{W}}}
\DeclareMathOperator{\Realpart}{Re}
\DeclareMathOperator{\Imagpart}{Im}
\DeclareMathOperator{\supp}{supp}
\newcommand{\Df}{\mathscr{D}}
\newcommand{\Sf}{\mathscr{S}}
\theoremstyle{plain}
\newtheorem{thm}{Theorem}[section]
\newtheorem{lem}[thm]{Lemma}
\newtheorem{prop}[thm]{Proposition}
\theoremstyle{definition}
\newtheorem{defn}[thm]{Definition}
\theoremstyle{remark}
\newtheorem*{rem}{Remark}
\begin{document}

\title{Waiting for Unruh}
\author{Christopher J. Fewster,$^{*1}$ 
Benito A. Ju\'arez-Aubry$^{\dagger 2}$ 
and 
Jorma Louko$^{\dagger 3}$}

\affil{$^*$Department of Mathematics, University of York,\\
Heslington, York YO10 5DD, UK\\
$^1$chris.fewster@york.ac.uk}

\affil{$^\dagger$School of Mathematical Sciences, University of Nottingham,\\
Nottingham NG7 2RD, UK\\
$^2$pmxbaju@nottingham.ac.uk, benito.juarez@correo.nucleares.unam.mx \\
$^3$jorma.louko@nottingham.ac.uk}

\date{May 2016; revised June 2016\\[1ex]
Published in Class.\ Quant.\ Grav.\  {\bf 33}, 165003 (2016)}

\maketitle

\begin{abstract}
How long does a uniformly accelerated observer need to 
interact with a quantum field in order to record thermality 
in the Unruh temperature? 
We address this question for a pointlike Unruh-DeWitt detector, 
coupled linearly to a real Klein-Gordon field of mass $m\ge0$ and 
treated within first order perturbation theory, 
in the limit of large detector energy gap~$E_{\text{gap}}$. 
We first show that when the interaction duration $\Delta T$ is fixed, 
thermality in the sense of detailed balance cannot hold 
as $E_{\text{gap}}\to\infty$, 
and this property generalises from the Unruh effect to any 
Kubo-Martin-Schwinger state satisfying certain technical conditions. 
We then specialise to a massless field in four spacetime dimensions 
and show that detailed balance does hold when $\Delta T$ 
grows as a power-law in $E_{\text{gap}}$ as $E_{\text{gap}}\to\infty$, provided 
the switch-on and switch-off intervals are stretched proportionally to 
$\Delta T$ and the switching function 
has sufficiently strong Fourier decay. 
By contrast, if $\Delta T$ grows by 
stretching a plateau in which the interaction remains at constant strength 
but keeping the duration of the
switch-on and switch-off intervals fixed, 
detailed balance at $E_{\text{gap}}\to\infty$ requires $\Delta T$ 
to grow faster than any polynomial in~$E_{\text{gap}}$, 
under mild technical conditions. 
These results also hold
for a static detector in a Minkowski heat bath. 

The results limit the utility of the large $E_{\text{gap}}$ 
regime as a probe of thermality in time-dependent versions 
of the Hawking and Unruh effects, such as 
an observer falling into a radiating black hole. They may also
have implications on the design of prospective 
experimental tests of the Unruh effect. 
\end{abstract}

\singlespacing

\section{Introduction\label{sec:intro}} 

In the Unruh and Hawking 
effects in quantum field theory, an 
observer responds to a pure state of the quantum field 
as if the field were in a mixed, thermal state. 
In the case of the Unruh
effect~\cite{Unruh:1976db}, 
a uniformly linearly accelerated observer in Minkowski spacetime 
measures a temperature that is proportional to the acceleration
when the field is in the Poincar\'e invariant 
Minkowski vacuum state. 
In the case of the Hawking effect~\cite{Hawking:1974sw}, 
a stationary observer far from a black hole 
records thermal radiation coming from the hole when 
the field is in the quantum state that evolved from an appropriate  
no-particle state before the black hole was formed. 
Reviews of these phenomena and their interrelations 
are given in~\cite{Birrell:1982ix,Wald:1995yp}. 

To characterise the thermality in the Hawking and Unruh 
effects in terms of local measurements, we may
consider an observer who carries
a spatially localised quantum system 
that is coupled to the field, 
a particle detector~\cite{Unruh:1976db,DeWitt:1979}, 
and we may ask the observer to count the 
excitation and de-excitation 
transitions in the detector. 
The detector's response is
considered thermal when the 
transition probabilities satisfy
the detailed balance form 
\cite{Takagi:1986kn,Fredenhagen:1986jg} of the 
Kubo-Martin-Schwinger (KMS) condition
\cite{Kubo:1957mj,Martin:1959jp,Haag:1967sg}: 
the ratio of the excitation and de-excitation probabilities 
of energy gap $E_{\text{gap}}$ is $\ee^{-E_{\text{gap}}/T}$, 
where the positive constant 
$T$ is interpreted as the temperature. 
In this setting, 
there is broad evidence that 
detailed balance 
in the Unruh and Hawking effects emerges 
for an asymptotically stationary observer 
when the interaction time is long and the 
switching effects are negligible but the 
back-reaction of the observer on the quantum 
field still remains small
\cite{Unruh:1976db,Birrell:1982ix,Wald:1995yp,Higuchi:1993cya,Sriramkumar:1994pb,DeBievre:2006px,Lin:2006jw,Satz:2006kb,Louko:2007mu,Dappiaggi:2009fx,Juarez-Aubry:2014jba}. 

The purpose of this paper is to obtain precise asymptotic 
results about the long interaction time limit under which a 
detector records thermality in the Unruh effect, 
and specifically to examine how this limit depends 
on the detector's energy gap~$E_{\text{gap}}$. In short, 
how long does a uniformly accelerated observer need to wait 
for Unruh thermality to kick in at prescribed~$E_{\text{gap}}$, 
especially when $E_{\text{gap}}$ is large? 

We focus on the Unruh (rather than Hawking) effect
because of technical simplicity~\cite{Bisognano:1976za,Crispino:2007eb}, 
but we note that the conceptual 
issues about the meaning of a ``particle'' 
in a detector-field interaction are present already in 
the Unruh effect~\cite{Unruh:1983ms,Buchholz:2014jta,Buchholz:2015fqa}. 
We employ a pointlike two-level  
Unruh-DeWitt detector that is linearly 
coupled to a 
scalar field~\cite{Unruh:1976db,DeWitt:1979}, 
a system that models the interaction between atoms and the electromagnetic 
field when angular momentum interchange 
is negligible~\cite{MartinMartinez:2012th,Alhambra:2013uja}. 
Crucially, we assume that the detector-field coupling is 
proportional to a switching function 
$\chi \in C_0^\infty(\mathbb{R})$, 
a real-valued smooth compactly supported 
function of the detector's proper time. 
Finally, we assume throughout that the magnitude of the 
detector-field coupling is so small that 
first-order perturbation theory in the coupling remains applicable. 

In summary, we combine in a novel way 
the following three pieces of input: 
(a) we ask how the observation time required to record detailed 
balance depends on the energy scale at which the measurements are made; 
(b) we formulate this question using switching functions 
that are smooth and compactly supported; 
(c) our analysis is mathematically rigorous, especially 
regarding the distributional singularity of the Wightman function. 

We begin by considering not just the Unruh effect but 
any quantum state and detector motion 
in which the field's Wightman function, pulled back to the detector's worldline, 
satisfies the KMS condition
\cite{Kubo:1957mj,Martin:1959jp,Haag:1967sg} 
with respect to translations in the detector's proper time
at some positive temperature. 
We consider two families of switching functions, 
each of which involves a scaling parameter~$\lambda$. 
In the first family, the overall 
duration of the interaction is scaled by the factor~$\lambda$, 
including the switch-on and switch-off parts of the interaction: 
we refer to this as the adiabatic scaling. 
In the second family, the interaction is constant 
over an interval whose duration is proportional to~$\lambda$, 
while the preceding switch-on interval and the subsequent 
switch-off interval have 
$\lambda$-independent durations: 
we refer to this as the plateau scaling. 
The long interaction limit is hence obtained as 
$\lambda\to\infty$ within each family. 
For fixed energy gap, we verify that the 
$\lambda\to\infty$ limit does 
recover detailed balance in the KMS temperature, 
for both families. 
This observation is fully within expectations, 
and it justifies 
the long time response formulas 
that are standard in the literature
\cite{Unruh:1976db,Birrell:1982ix,Wald:1995yp,DeWitt:1979}. 
However, we also show that when the switching function is fixed, 
the detailed balance condition fails to hold 
at large values of the detector's energy gap. 
Physically, when the total interaction time is fixed, 
the detector does not have enough time to thermalise up to 
arbitrarily high energies. 

We may hence ask how fast $\lambda$ needs to 
increase as a function of $E_{\text{gap}}$ if the detector is to record detailed 
balance in the limit of large $E_{\text{gap}}$. 
A~waiting time that grows no 
faster than polynomially in $E_{\text{gap}}$ 
would presumably be realisable in experimental situations, 
such as prospective experimental tests of the Unruh effect, 
whereas a waiting time that needs to grow exponentially 
in $E_{\text{gap}}$ would presumably render 
an experimental verification of detailed balance 
at high $E_{\text{gap}}$ impractical. 

Our main result is to answer this question for the Unruh effect 
in four-dimensional Minkowski spacetime with a massless scalar field. 
Specialising to four spacetime dimensions is motivated by prospective 
applicability to future experiments, while setting 
the field mass to zero has the consequence that the 
only dimensionful parameter in the problem is the detector's 
proper acceleration~$a$. The detector's response then becomes
identical to that of a static detector in a 
static Minkowski heat bath of 
temperature~$a/(2\pi)$~\cite{Takagi:1986kn},
and all of our results will apply also there. 

For the adiabatic scaling, we show that 
detailed balance at large $E_{\text{gap}}$ 
can be achieved by letting $\lambda$ grow as a power-law 
in~$E_{\text{gap}}$, provided the Fourier 
transform of the switching function has sufficiently strong 
falloff properties. For the plateau scaling, by contrast, 
we show that 
detailed balance at large $E_{\text{gap}}$ requires 
$\lambda$ to grow faster than any strictly increasing, 
differentiable and 
polynomially bounded function of~$E_{\text{gap}}$. 

We conclude that in order to achieve detailed 
balance in the Unruh effect within an interaction 
time that grows as a power-law of $E_{\text{gap}}$ 
when $E_{\text{gap}}$ is large, 
it is crucial to stretch not just the overall duration 
of the interaction but also the intervals 
in which the interaction is switched on and off. 

We begin by introducing relevant mathematical 
notation and analytical preliminaries in Section~\ref{sec:prelim}. 
The detector model is introduced in Section~\ref{sec:coupling}.
Section \ref{sec:statdet-KMS}
establishes the precise connection between the KMS condition and 
detailed balance in the long interaction limit at fixed energy gap, 
and shows that this connection cannot hold 
for fixed interaction duration in the 
limit of large energy gap. 
We also give a technical formulation of what it 
means for detailed balance to hold at large energy gap 
in interaction time that grows as a 
power-law in the energy gap. 
The main results about the Unruh effect in 
four spacetime dimensions are 
obtained in Section~\ref{sec:unruh-section}. 
Section \ref{sec:conclusions} 
presents a summary and brief concluding remarks. 
Proofs of a number of auxiliary technical results 
are deferred to three appendices.

\section{Conventions and analytical preliminaries\label{sec:prelim}}

Our metric signature is mostly plus, and we work in in 
units in which $\hbar = c =1$. 
Complex conjugation is denoted by an overline. 
$O(x)$ denotes a quantity such that $x^{-1}O(x)$ 
is bounded as $x\to0$, and $O^\infty(x)$ 
denotes a quantity that falls off faster than any 
positive power of $x$ as $x\to0$. 
$C_0^\infty(\BbbR)$ denotes the space of smooth complex-valued functions on $\BbbR$ with 
compact support, i.e., vanishing identically outside a bounded set. 

The Fourier transform
of sufficiently regular $f:\BbbR\to\BbbC$ is defined by 
\begin{align}
\widehat{f}(\omega)=\int_{-\infty}^\infty ds\, f(s) \, \ee^{-\ii\omega s}\,.
\end{align}
As is well known (see e.g., \cite{Hormander:1983,Reed-Simon-vol2}) 
the transform of a Schwartz test function  
$f\in\Sf(\BbbR)$ (i.e., a smooth function which, together with its derivatives, 
decays faster than any inverse power at infinity) is also of the Schwartz class, 
while the transform of a smooth compactly supported
function extends to a holomorphic function on the whole complex plane
which obeys Paley--Wiener growth estimates and in particular decays
faster than any inverse power at infinity on the real axis. 

It will be important in what follows to be more
quantitative about exactly how fast the transform of a smooth
compactly supported test function can decay. On one hand, 
it is known that there are nontrivial smooth compactly supported $g$ with transforms obeying bounds of the form $|\widehat{g}(\omega)|\le K \ee^{-\gamma |\omega|^{q}}$ for any $0<q<1$ and positive constants $K$ and $\gamma$ -- see \cite{ingham,Fewster:2015hga} for constructions and examples. 
However, the transform of a nontrivial smooth compactly
supported function cannot decay exponentially fast: for if
\begin{align}\label{eq:expdecay}
|\widehat{g}(\omega)|\le K e^{-\gamma|\omega|}
\end{align} 
for some positive
constants $\gamma$ and $K$, then the Fourier inversion formula permits
us to extend $g(s)$ to a holomorphic function in the strip $|\Imagpart s|<\gamma$
and the fact that $g$ is compactly supported on the real axis then
implies that $g$ vanishes identically. 

This line of thought can 
be developed further: describing an open subset $\Omega\subset\BbbR$ as 
\emph{modest\/} if every bounded locally integrable function supported in $\Omega$ has a Fourier transform that is holomorphic in an open strip containing the real axis,
one has the following.
\begin{lem} 
If $g$ is smooth and compactly supported, 
and \eqref{eq:expdecay} holds except on a modest set~$\Omega$, 
then $g$ vanishes identically.
\label{lem:g-S}
\end{lem}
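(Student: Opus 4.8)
The plan is to adapt the argument just given for the case $\Omega=\emptyset$, by splitting the Fourier inversion integral into a piece where the exponential bound is available and a piece supported on the modest set. Since $g\in C_0^\infty(\BbbR)$, the transform $\widehat g$ is of Schwartz class, hence bounded and integrable, so the inversion formula
\begin{align}
g(s)=\frac{1}{2\pi}\int_{-\infty}^\infty \widehat g(\omega)\,\ee^{\ii\omega s}\,d\omega
\end{align}
holds pointwise. I would write $\widehat g=h_1+h_2$ with $h_1=\widehat g\,\unit_{\BbbR\setminus\Omega}$ and $h_2=\widehat g\,\unit_{\Omega}$, and correspondingly $g=G_1+G_2$ where $G_j(s)=\tfrac{1}{2\pi}\int h_j(\omega)\,\ee^{\ii\omega s}\,d\omega$.

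First I would show that $G_1$ extends to a holomorphic function on the strip $|\Imagpart s|<\gamma$: on the support of $h_1$ one has $|h_1(\omega)|\le K\ee^{-\gamma|\omega|}$, so for $|\Imagpart s|\le\gamma-\varepsilon$ the integrand is dominated, uniformly in $s$, by $(K/2\pi)\ee^{-\varepsilon|\omega|}$, and holomorphy follows by differentiation under the integral sign (or by Morera's theorem with Fubini). Next, $h_2$ is bounded, because $\widehat g$ is, and is supported in $\Omega$, hence bounded and locally integrable; since $\Omega$ is modest, $\widehat{h_2}$ is holomorphic in some open strip about the real axis, and as $G_2(s)=\tfrac{1}{2\pi}\widehat{h_2}(-s)$ this furnishes a holomorphic extension of $G_2$ to that strip.

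It follows that $g=G_1+G_2$ extends to a holomorphic function on the intersection of the two strips, a connected open subset of $\BbbC$ containing the real axis. Because $g$ has compact support in $\BbbR$, it vanishes on the nonempty open sets $(R,\infty)$ and $(-\infty,-R)$ of the real axis for suitable $R>0$, so the identity theorem for holomorphic functions forces the holomorphic extension, and hence $g$ itself, to vanish identically.

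The argument is essentially bookkeeping once the decomposition is set up; the only points needing care are to check that $h_2=\widehat g\,\unit_\Omega$ genuinely meets the hypotheses in the definition of a modest set (boundedness and local integrability, both immediate here), so that the holomorphy of $\widehat{h_2}$ near $\BbbR$ may legitimately be invoked, and to keep track of the two a priori different strips so that the identity theorem is applied on a connected domain. I do not expect any substantive obstacle beyond this.
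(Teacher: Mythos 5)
Your proposal is correct and is essentially identical to the paper's own proof: the paper likewise splits $\widehat{g}$ into $\varphi\widehat{g}$ (supported in $\Omega$, handled by the modest-set hypothesis) and $(1-\varphi)\widehat{g}$ (handled by the exponential bound), obtains holomorphic extensions of each inverse transform to strips $S_1$ and $S_2$, and concludes from the compact support of $g$ on the real axis that $g$ vanishes identically. The extra care you take over boundedness and local integrability of $\widehat{g}\,\unit_\Omega$ and over applying the identity theorem on the connected intersection strip is exactly what the paper leaves implicit.
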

\begin{proof}
Let $\varphi$ be the characteristic function of~$\Omega$. 
Then $F = \varphi \widehat{g}$ has an inverse Fourier transform 
that is holomorphic in some open strip $S_1$  
around the real axis. 
As $\bigl|\bigl(1-\varphi(\epsilon)\bigr)\widehat{g}(\epsilon)\bigr| 
\leq K \ee^{-\gamma |\epsilon|}$ 
for all $\epsilon \in \mathbb{R}$, 
$(1-\varphi)\widehat{g}$ has an inverse transform that is holomorphic in some open 
strip $S_2$ around the real axis. 
Thus, $g$ extends from the real axis to a holomorphic 
function in the strip $S_1 \cap S_2$. 
But since $g$ has compact support on the real axis, $g$ has to vanish everywhere.
\end{proof}

Turning to distributions, the Fourier transform of a 
tempered distribution $U\in\Sf'(\BbbR)$
is defined so that $\widehat{U}(f)=U(\widehat{f})$ 
for any test function $f\in\Sf(\BbbR)$ 
(note that $f$ is a frequency-domain function here),
which implies the Plancherel formula 
$\widehat{U}(\overline{\widehat{g}})=2\pi U(\overline{g})$ 
for (time-domain) test functions~$g$. 

\section{The field-detector model\label{sec:coupling}} 

In this section we specify the field-detector model 
which the rest of the paper will analyse. 
The main point is the expression in 
\eqref{Probability}
and 
\eqref{ResponseFn}
for the detector's transition probability 
in a first-order perturbation theory treatment. 

Our detector is a pointlike quantum system that 
moves in a Lorentz-signature spacetime 
on the worldline $\textsf{x}(\tau)$, 
a smooth timelike curve 
parametrised by its proper time~$\tau$. 
We take the detector to be a two-level system, 
with the Hilbert space $\mathcal{H}_\text{D} \simeq \BbbC^2$ spanned 
by the orthonormal basis $\lbrace |0 \rangle, |1 \rangle \rbrace$, 
such that 
$H_{\text{D}} |0\rangle = 0 |0\rangle$ and $H_{\text{D}} |1 \rangle =
E |1 \rangle$, where $H_{\text{D}}$ is the detector's Hamiltonian 
with respect to $\tau$ and the constant $E\in\BbbR$ is the detector's energy gap. 
For $E > 0$ we may call $|0 \rangle$ the detector's ground state and
$|1 \rangle$ the excited state; for $E < 0$ the roles of $|0 \rangle$ and $|1 \rangle$ 
are reversed. 

The quantum field is a real scalar field~$\Phi$. 
When there is no coupling to the detector, 
we assume that the field is the free minimally coupled Klein--Gordon field of mass $m\ge 0$, 
and we assume the Hilbert space $\mathcal{H}_\Phi$ 
contains Hadamard state vectors \cite{Decanini:2005eg}
and admits a unitary time-evolution generated by a Hamiltonian.  
This is in particular the case when the spacetime is globally hyperbolic and stationary 
and the Hilbert space is the Fock space 
induced by the unique 
(sufficiently regular) ground or 
KMS states\footnote{The regularity required is minimal, 
but the states are then automatically quasifree and Hadamard.} 
in the $m>0$ case (see parts (i), (ii) and (iv) of
Theorems 5.1 and 6.2 in~\cite{Sanders:2013}). Of course
it also holds in the familiar setting of four-dimensional 
Minkowski spacetime for $m\ge 0$ in the vacuum representation.  

The Hilbert space of the total system is 
$\mathcal{H}_\Phi \otimes \mathcal{H}_\text{D}$, 
and the Hamiltonian of the total system is 
$H = H_{\Phi} \otimes \unit_{\text{D}} + \unit_{\Phi} \otimes
H_{\text{D}} + H_{\text{int}}$, where $H_{\text{D}}$ 
is the Hamiltonian of the detector, 
$H_{\Phi}$ is 
the Hamiltonian
of $\Phi$, and $H_{\text{int}}$ is the interaction Hamiltonian. 
We take $H_{\text{int}}$ to be 
\begin{align}
H_{\text{int}}(\tau) = c \chi(\tau) \Phi\bigl(\textsf{x}(\tau)\bigr) \otimes \mu(\tau)
\ , 
\label{Hint}
\end{align}
where $c \in \BbbR$ is a coupling constant, 
$\mu$ is the detector's monopole moment
operator and $\chi \in C_0^\infty(\mathbb{R})$ 
is a real-valued switching function that specifies how the 
interaction is turned on and off. 
We assume throughout that $\chi$ is nonvanishing somewhere. 

Suppose now that before the interaction starts, the detector is 
in the state $|0 \rangle$ and 
$\Phi$ is in some Hadamard state~$|\phi_i \rangle$. 
After the interaction has ceased, 
the probablity to find the detector in the state~$|1 \rangle$, 
regardless the final state of~$\Phi$, 
is given in first order perturbation theory by 
\cite{Unruh:1976db,Birrell:1982ix,Wald:1995yp,DeWitt:1979}
\begin{align}
\mathcal{P}
= c^2 \left| \langle 1 | \mu(0) | 0 \rangle \right|^2 \mathcal{F}(E)
\ , 
\label{Probability}
\end{align}
where 
\begin{equation}
\mathcal{F}(E) = \int_{-\infty}^\infty \! d\tau' \, 
\int_{-\infty}^\infty \! d\tau'' \, \chi(\tau') \chi(\tau'') \, 
\ee^{-\ii E (\tau'-\tau'')} 
\, \mathcal{W}(\tau',\tau'')
\ , 
\label{ResponseFn}
\end{equation}
and $\mathcal{W}\in\Df'(\BbbR\times\BbbR)$ is the pull-back of the Wightman function of $\Phi$ in the state $|\phi_i \rangle$ to the detector's worldline. 
$\mathcal{F}$~is called the response function of the detector, 
or the (smeared) power spectrum of the field's vacuum noise. 
As $|\phi_i \rangle$ is by assumption Hadamard, 
$\mathcal{W}$ is a well-defined distribution 
\cite{Hormander:1983,Fewster:1999gj} 
and $\mathcal{F}(E)$ is hence well 
defined pointwise for each~$E$. 
Note that $\mathcal{F}$ is real-valued since 
$\overline{\mathcal{W}(\tau',\tau'')} = \mathcal{W}(\tau'',\tau')$.
Equation \eqref{ResponseFn} is
more formally written as $\mathcal{F}(E)=\mathcal{W}(\overline{\chi_E},\chi_E)$,
where $\chi_E(\tau')=\ee^{\ii E\tau'}\chi(\tau')$. 
In this form it is clear that $\mathcal{F}$ is non-negative because of the positivity property 
$\mathcal{W}(\overline{f},f)\ge 0$ for all test functions~$f$, which ultimately arises from the positivity condition on states; here we also use the fact that $\chi$ is real-valued. 

In summary, the response function 
$\mathcal{F}$ \eqref{ResponseFn}
encodes how the transition probability depends on the 
field's initial state, the detector's energy gap, 
the detector's trajectory and the detector's switch-on and swich-off. 
The internal structure of the detector and the overall coupling strength 
enter only via the constant overall factor in~\eqref{Probability}. 
We shall from now on 
refer to the response function as the probability.

\section{Stationary detector in a KMS state\label{sec:statdet-KMS}} 

In this section we first give two mathematically 
precise formulations 
of the observation that the transition rate of a 
stationary detector in a KMS state 
satisfies the detailed balance condition provided 
the detector operates for a long time 
and the switching effects are negligible.  
We then show that the approach to detailed balance
in the long time limit cannot be uniform when 
the detector's energy gap increases, 
under a set of mild technical conditions on the switching. 
Finally, we formulate mathematically the question of how 
long the interaction needs to last in order
for detailed balance to hold at a given 
energy gap when the energy gap is large.

\subsection{The KMS condition and detailed balance}

We consider a pulled-back two-point function $\mathcal{W}$ that  
depends on its two arguments only through their difference, 
\begin{align}
\mathcal{W}(\tau',\tau'') = \W (\tau'-\tau'')
\ , 
\label{eq:Wstat-def}
\end{align}
and $\W$ will from now on denote the distribution   
on the right-hand side of~\eqref{eq:Wstat-def}. 
The detector's response is then invariant under translations in~$\chi$,
and is now given by
\begin{align}
\mathcal{F}(E) = 
\W(\overline{\chi_E}\star \chi_E) =\frac{\widehat{W}(|\widehat{\chi_E}|^2)}{2\pi}
\ .
\end{align}
This situation arises whenever the detector's 
trajectory and the state of the field are stationary 
with respect to the same notion of time translations, 
such as when they are both invariant under a Killing vector 
that is timelike in a neighbourhood of the trajectory. 
We refer to such trajectories as stationary. 
The only time dependence in the detector's response 
comes then from the switching function~$\chi$. 

As $\W$ arises from a Wightman two-point function of a state,
the distribution $\W\in\Df'(\BbbR)$ is of positive type, in the sense that
$\W(\overline{f}\star \widetilde{f})\ge 0$ for all test functions $f\in C_0^\infty(\BbbR)$,
where $\star$ denotes convolution and $\widetilde{f}(s)=f(-s)$. By the Bochner--Schwartz
theorem~\cite{Reed-Simon-vol2}, it follows that $\W$ is in fact a tempered distribution, 
$\W\in\Sf'(\BbbR)$, and that its distributional Fourier transform $\widehat{\W}$ is a polynomially bounded (positive) measure. For our purposes, it will be convenient (though in most cases not absolutely necessary)
to assume that $\widehat{\W}$ is in fact a continuous and polynomially bounded {\em function},
which is necessarily non-negative. This holds, for example, if $\W$ can be written as the sum of an absolutely integrable function and a distribution with compact support (for the first has a continuous transform, vanishing at infinity, while the second has a transform that is analytic and polynomially bounded
on the real axis). 
The latter is a fairly mild condition: by general properties of Hadamard states, $\W$ can be
decomposed (nonuniquely) as a sum of a smooth function and a distribution of compact support, provided
that no distinct points on the detector's worldline can be joined by a null geodesic. The
condition that the smooth part falls off sufficiently to be integrable will be satisfied in the
concrete cases that we study. \emph{From now on, therefore, our standing assumption
is that $\W$ is a tempered distribution with a continuous, non-negative, polynomially bounded Fourier
transform.} In this case, the detector response can be written
\begin{align}
\mathcal{F}(E) = \frac{1}{2 \pi} \int_{-\infty}^\infty \! d \omega \, 
\left|\widehat{\chi}(\omega)\right|^2 \, 
\What(E+\omega)
\,. 
\label{Fsstat1}
\end{align}

The Kubo-Martin-Schwinger (KMS) 
definition of a thermal 
Wightman function \cite{Kubo:1957mj,Martin:1959jp,Haag:1967sg,Bros:1996mw}
is adapted to $\W$ by the following definition. 

\begin{defn}[\textbf{KMS condition}]
Suppose there is a positive constant $\beta$ 
and a holomorphic function $\W_{\BbbC}$ in the strip 
$S = \left\{ s \in \BbbC \mid -\beta < \Imagpart s < 0 \right\}$ such that 
\begin{enumerate}
\item[(i)]
$\W$ is the distributional boundary value
of $\W_{\BbbC}$ on the real axis;\footnote{That is, for each $\eta\in(0,\beta)$, 
there is a tempered distribution defined by $\W_\eta(f)=\int ds\, \W_{\BbbC}(s-i\eta)f(s)$
and $\W_\eta(f)\to \W(f)$ as $\eta\to 0+$ for all $f\in\Sf'(\BbbR)$.} 
\item[(ii)]
$\W_{\BbbC}$ has a distributional boundary value 
on the line $\Imagpart s = -\ii \beta$; 
\item[(iii)]
The two boundary values of $\W_{\BbbC}$ are linked by 
\begin{align}
\W_{\BbbC}(s-\ii 0) = \W_{\BbbC}(-s-\ii\beta+\ii 0)
\label{eq:KMS}
\end{align}
in `kernel notation' for $s\in\BbbR$;
\item[(iv)]
For any $0<a<b<\beta$ there is a polynomial $P$ so that $|\W(s)|<P(|\Realpart s|)$ for
all $s\in S$ with $-b\le \Imagpart s\le -a$.
\end{enumerate}
Then we say that $\W$ obeys the \emph{KMS condition\/} 
at temperature~$1/\beta$. 
\label{def:KMS}
\end{defn}

Definition \ref{def:KMS} is consistent with the KMS
property as defined for the full Wightman function
in~\cite{Kubo:1957mj,Martin:1959jp,Haag:1967sg,Bros:1996mw}. 
In particular, if a state in Minkowski spacetime
is KMS in the sense of~\cite{Kubo:1957mj,Martin:1959jp,Haag:1967sg,Bros:1996mw}, 
the $\W$ of a static detector is KMS in the sense of 
Definition~\ref{def:KMS}.

We recall next what is meant by detailed balance. 
\begin{defn}[\textbf{Detailed balance}]
A function $G: \BbbR \to \BbbR$ that satisfies 
\begin{align}
G(-\omega) = \ee^{\beta \omega} G(\omega)
\label{eq:detailed-balance}
\end{align}
for a constant $\beta>0$ is said to satisfy the detailed balance
condition at temperature~$1/\beta$. 
\end{defn}

The connection between KMS and detailed balance is
that if $\W$ is KMS, its Fourier transform satisfies detailed balance and vice versa. 
\begin{prop}
\label{prop:Wstat-detailed}
Under the standing assumptions stated above, 
$\W$
satisfies the KMS condition 
at temperature $1/\beta$ iff\/ $\What$ 
satisfies the detailed balance condition \eqref{eq:detailed-balance} 
at temperature~$1/\beta$. 
\end{prop}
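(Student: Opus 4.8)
Both directions will be handled by passing to the Fourier transform, using the standing assumption that $\What$ is a genuine continuous, non-negative, polynomially bounded function. The technical bridge is the identity
\begin{align}
\widehat{\W_\eta}(\omega) = \ee^{\eta\omega}\,\What(\omega)\,,\qquad 0<\eta<\beta\,,
\label{eq:bridge}
\end{align}
where $\W_\eta(s):=\W_{\BbbC}(s-\ii\eta)$ is, for $\eta$ in the open interval $(0,\beta)$, an honest smooth function of $s\in\BbbR$ that is polynomially bounded by condition~(iv) of Definition~\ref{def:KMS}, hence a tempered distribution with a well-defined transform. Granting~\eqref{eq:bridge}, the ``only if'' direction is immediate: letting $\eta\to\beta^-$ and using condition~(ii) gives $\widehat{\W_\beta}(\omega)=\ee^{\beta\omega}\What(\omega)$ for the boundary value $\W_\beta$ on the line $\Imagpart s=-\beta$, while condition~(iii), which in kernel notation says $\W(s)=\W_\beta(-s)$, gives on taking transforms $\What(\omega)=\widehat{\W_\beta}(-\omega)=\ee^{-\beta\omega}\What(-\omega)$, that is, detailed balance~\eqref{eq:detailed-balance}.

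The ``if'' direction I would prove constructively. Assuming $\What$ satisfies detailed balance, I would set
\begin{align}
\W_{\BbbC}(s):=\frac{1}{2\pi}\int_{-\infty}^{\infty}\! d\omega\,\What(\omega)\,\ee^{\ii\omega s}\,.
\label{eq:WC-def}
\end{align}
Detailed balance rewrites as $\ee^{\beta\omega}\What(\omega)=\What(-\omega)$, which is polynomially bounded, so the integral in~\eqref{eq:WC-def} converges absolutely for $-\beta<\Imagpart s<0$, uniformly for $-b\le\Imagpart s\le-a$ with $0<a<b<\beta$. Hence~\eqref{eq:WC-def} defines a holomorphic function on the strip obeying condition~(iv) with a constant polynomial; dominated convergence in $\omega$ produces the distributional boundary values~(i) and~(ii), the latter with transform $\ee^{\beta\omega}\What(\omega)=\What(-\omega)$; and condition~(iii) follows from~\eqref{eq:WC-def} by the substitution $\omega\mapsto-\omega$ together with detailed balance. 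These steps are routine once the growth bookkeeping is in place.

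The real work, and the step I expect to be the main obstacle, is establishing~\eqref{eq:bridge} in the ``only if'' direction, i.e.\ extracting a frequency-domain statement from the \emph{given} holomorphic function $\W_{\BbbC}$. I would do this by a contour shift: for $0<\eta'<\eta<\beta$ and a frequency-domain test function $\psi\in C_0^\infty(\BbbR)$, deforming the contour in $\int_{\Imagpart z=-\eta}\W_{\BbbC}(z)\,\widehat{\psi_\eta}(z)\,dz$, where $\psi_\eta(\omega)=\ee^{\eta\omega}\psi(\omega)$, down to the line $\Imagpart z=-\eta'$ is legitimate by Cauchy's theorem, because $\W_{\BbbC}$ is polynomially bounded on horizontal lines inside the strip by condition~(iv) while $\widehat{\psi_\eta}$ decays faster than any polynomial along each such line; unravelling the two sides then identifies $\widehat{\W_\eta}$ with $\ee^{(\eta-\eta')\omega}\widehat{\W_{\eta'}}$ as tempered distributions. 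Letting $\eta'\to0^+$ and invoking the boundary-value hypothesis~(i) together with continuity of the Fourier transform on $\Sf'(\BbbR)$ then yields~\eqref{eq:bridge}. Equivalently, holomorphy of $\W_{\BbbC}$ makes $\eta\mapsto\W_\eta$ an $\Sf'(\BbbR)$-valued solution of the Cauchy--Riemann equation $\partial_\eta\W_\eta=-\ii\,\partial_s\W_\eta$, whose transform satisfies $\partial_\eta\widehat{\W_\eta}=\omega\,\widehat{\W_\eta}$ and is thereby forced to equal~\eqref{eq:bridge}; this route needs the same growth control to make sense of the $\eta$-derivative in $\Sf'$. In either case the delicate point is purely the functional-analytic care needed to move distributional boundary values and rapidly decaying test functions across the strip, and condition~(iv) is exactly the hypothesis ensuring that everything stays tempered uniformly in~$\eta$.
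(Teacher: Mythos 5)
Your proposal is correct and follows essentially the same route as the paper: the ``if'' direction constructs $\W_{\BbbC}$ as the inverse Fourier transform of $\What$ exactly as the paper does, and your ``only if'' argument --- a contour shift justified by condition~(iv) and the Paley--Wiener decay of $\widehat{\psi_\eta}$, followed by the boundary limits --- is the same deformation the paper performs (it simply carries the contour from near the real axis to near $\Imagpart s=-\beta$ in one step and applies condition~(iii) at the end, rather than passing through the intermediate identity $\widehat{\W_\eta}=\ee^{\eta\omega}\What$).
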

\begin{proof}
The standing assumptions require that $\What(\omega)\le P(\omega)$ 
some polynomial $P$. If detailed balance holds then, in fact, $\What(\omega)\le \ee^{-\beta\omega}P(-\omega)$ for $\omega>0$ and  
\begin{align}
\W_{\BbbC}(s) \doteq  
\int_{-\infty}^\infty \frac{d \omega}{2\pi} \, \ee^{\ii \omega s} \,
\What(\omega)  
\end{align}
defines a holomorphic function in the strip $S$, obeying the growth estimates
required in the KMS condition. By 
direct calculation using detailed balance, one finds that 
$\W_{\BbbC}(s-\ii\epsilon)=\W_{\BbbC}(-s-\ii\beta+\ii\epsilon)$ for any $s\in\BbbR$ and $0<\epsilon<\beta$,
and so the KMS condition \eqref{eq:KMS} holds in the sense of distributional
boundary values on taking  $\epsilon\to 0+$. 

Conversely, suppose that the KMS condition holds. For any $f\in C_0^\infty(\BbbR)$,
and $0<\delta,\epsilon<\beta/2$ then 
$F(s)=\W_{\BbbC}(s) \widehat{f}(s+\ii\epsilon)$ 
is holomorphic in $S$
and $|F(s)|\to 0$ faster than any inverse power as $|s|\to\infty$ in $-\beta+\delta \le
\Imagpart s\le -\epsilon$. It is thus legitimate to deform an integration contour
to obtain
\begin{align}
\int_{-\infty}^\infty ds\,\W_{\BbbC}(s-\ii\epsilon)\widehat{f}(s) = 
\int_{-\infty}^\infty ds\,\W_{\BbbC}(s-\ii\beta+\ii\delta)\widehat{f}(s-\ii\beta+\ii\epsilon+\ii\delta)\,.
\end{align}
Taking the limit $\epsilon\to 0+$ and then $\delta\to 0+$, using the KMS condition,
gives $\widehat{\W}(f)=\W(\widehat{f})=\widetilde{\W}(\widehat{f_\beta})=\widehat{\W}(\widetilde{f_\beta})$
where $f_\beta(\omega)=\ee^{-\beta\omega}f(\omega)$. Detailed
balance follows because $f$ was arbitrary in $C_0^\infty(\BbbR)$ and $\widehat{\W}$
is continuous by the standing assumptions.
\end{proof}

\subsection{Long interaction limit}

If the interaction lasts for a long time and the 
switching effects remain small, 
\eqref{Fsstat1} suggests that $\mathcal{F}(E)$ 
should be approximately proportional to~$\What(E)$,
in which case 
Proposition 
\ref{prop:Wstat-detailed}
shows that the detector's response 
has approximately detailed balance 
if $\W$ obeys the KMS condition
\cite{Unruh:1976db,Birrell:1982ix,Wald:1995yp,DeWitt:1979,Takagi:1986kn}. 
We shall present two implementations of a long time limit 
in which these expectations do hold 
in the sense of a long time limit 
with fixed~$E$.

\subsubsection{Adiabatic scaling}

We first consider the one-parameter family of switching functions 
\begin{align} 
\chi_\lambda(\tau) \doteq \chi(\tau/\lambda)
\ , 
\label{eq:adiab-chi}
\end{align}
where $\chi$ is a fixed switching function and 
$\lambda$ is a positive parameter. 
We refer to this family as the adiabatically scaled switching. 
The effective duration of the interaction is $\lambda$ 
times the width of the support of~$\chi$, 
and the long time limit is $\lambda\to\infty$. 

As $\lambda\to\infty$, 
we may expect the response to diverge proportionally to $\lambda$ 
because of the growing duration of the interaction. 
A more useful quantity is hence 
$1/\lambda$ times the response function, which we denote by 
$\widetilde{\mathcal{F}}_\lambda$, and for which 
\eqref{Fsstat1} and \eqref{eq:adiab-chi} give the formula 
\begin{align}
\widetilde{\mathcal{F}}_\lambda(E)  
= \frac{1}{2 \pi} \int_{-\infty}^{\infty} \! d \omega \, 
\left|\widehat{\chi}(\omega)\right|^2 
\, 
\What(E + \omega/\lambda)
\ . 
\label{eq:Fglambda}
\end{align}
The long time limit is given by the following proposition. We denote by 
$||\cdot||$ the $L^2$ norm in $L^2(\BbbR, d\omega)$. 

\begin{prop}
\label{prop:pointwise-lambdainfty}
For each fixed $E\in\BbbR$, 
$\widetilde{\mathcal{F}}_\lambda(E) \to 
\widetilde{\mathcal{F}}_\infty(E)$ as $\lambda\to\infty$, 
where 
\begin{align}
\widetilde{\mathcal{F}}_\infty(E) \doteq 
\frac{1}{2 \pi} ||\widehat{\chi}||^2 \,  
\What(E)
\ . 
\label{eq:Fglambda-limit}
\end{align}
\end{prop}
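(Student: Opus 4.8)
The plan is to pass the $\lambda\to\infty$ limit inside the integral in \eqref{eq:Fglambda} by dominated convergence, treating $|\widehat{\chi}(\omega)|^2$ as the (fixed, integrable) integrating weight and $\What(E+\omega/\lambda)$ as the varying factor. First I would observe that, by the standing assumption that $\What$ is continuous, the integrand converges pointwise for each fixed $\omega$: $|\widehat{\chi}(\omega)|^2\,\What(E+\omega/\lambda)\to|\widehat{\chi}(\omega)|^2\,\What(E)$ as $\lambda\to\infty$. Integrating the pointwise limit in $\omega$ produces $\frac{1}{2\pi}\|\widehat{\chi}\|^2\,\What(E)$, which is precisely $\widetilde{\mathcal{F}}_\infty(E)$; here $\|\widehat{\chi}\|^2=\int_{-\infty}^{\infty}|\widehat{\chi}(\omega)|^2\,d\omega$ is finite because $\chi\in C_0^\infty(\BbbR)$ forces $\widehat{\chi}$ to be of Schwartz class, so $|\widehat{\chi}|^2$ decays faster than any inverse power.

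The substantive step is the construction of a $\lambda$-independent integrable dominating function. By the standing assumption that $\What$ is (non-negative and) polynomially bounded, there are constants $C>0$ and $N\in\mathbb{N}$ with $\What(\omega)\le C\,(1+|\omega|)^N$ for all $\omega\in\BbbR$. Restricting attention to $\lambda\ge1$ (which suffices for the limit) gives $|E+\omega/\lambda|\le|E|+|\omega|$, hence $\What(E+\omega/\lambda)\le C\,(1+|E|+|\omega|)^N$, so the integrand is bounded above, uniformly in $\lambda\ge1$, by $C\,(1+|E|+|\omega|)^N\,|\widehat{\chi}(\omega)|^2$. This bound is integrable in $\omega$ because the Schwartz decay of $\widehat{\chi}$ beats any polynomial; with $E$ held fixed the polynomial factor is a harmless fixed shift. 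The dominated convergence theorem then yields $\widetilde{\mathcal{F}}_\lambda(E)\to\widetilde{\mathcal{F}}_\infty(E)$.

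I do not expect a genuine obstacle: the only point needing care is that the domination is uniform over $\lambda\ge1$ and that $E$ enters merely as a fixed additive shift, both of which are immediate once $E$ is fixed. (Intuitively, after the substitution $\omega\mapsto\lambda\omega$ the weight $\lambda|\widehat{\chi}(\lambda\,\cdot\,)|^2$ is an approximate identity of total mass $\|\widehat{\chi}\|^2$ concentrating at the origin, and it is tested against the continuous, polynomially bounded function $\What(E+\,\cdot\,)$; if one prefers not to assume $\What$ is a function, the same computation goes through using weak-$*$ convergence of the rescaled measures against the Schwartz function $\omega\mapsto|\widehat{\chi}(\lambda(\omega-E))|^2$, but under the standing assumptions the elementary version above is enough.)
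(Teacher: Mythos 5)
Your proof is correct and follows essentially the same route as the paper's: a $\lambda$-independent integrable dominating function built from the polynomial bound on $\What$ (using $|E+\omega/\lambda|\le|E|+|\omega|$ for $\lambda\ge1$) and the rapid decay of $\widehat{\chi}$, followed by dominated convergence. Your explicit remark that the pointwise convergence of the integrand rests on the continuity of $\What$ is a detail the paper leaves implicit, but the argument is the same.
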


\begin{proof}
Let $E\in\BbbR$ be fixed. 
By the polynomial bound on $\bigl|\What(\omega)\bigr|$, 
there then 
exist constants $A >0$, $B>0$ and $n \in \BbbZ_+$ 
such that for $\lambda\ge1$ we have 
$\bigl|\What(E + \omega/\lambda)\bigr| 
< A  + B (E + \omega/\lambda)^{2n} 
\le 
A + B (|E| + |\omega|/\lambda)^{2n} 
\le 
A + B (|E| + |\omega|)^{2n}$. As 
$\widehat\chi(\omega)$ decays at $\omega \to\pm \infty$ 
faster than any inverse power of~$\omega$, 
the integrand in \eqref{eq:Fglambda} 
is hence bounded for $\lambda\ge1$ by 
a $\lambda$-independent integrable function, 
and the 
$\lambda\to\infty$ limit in \eqref{eq:Fglambda} may be taken 
under the integral by dominated convergence. 
\end{proof}

\subsubsection{Plateau scaling}

We next consider a one-parameter family of switching functions 
that take a nonvanishing constant value over an interval of 
adjustable duration, but the switch-on interval and 
the switch-off interval have fixed duration. 
We refer to this family as the plateau switching. 

To construct the family, we choose a bump function 
$\psi \in C_0^\infty(\mathbb{R})$, a non-negative 
smooth function with the support $[0, \tau_s]$, 
where the positive constant $\tau_s$ will 
be the duration of the switch-on and switch-off periods. 
Let $\tau_p$ be another positive constant, and define 
\begin{align}
\chi_\lambda(\tau) = \int_{-\infty}^{\tau} d \tau' 
\left[ \psi(\tau') - \psi(\tau' - \tau_s - \lambda \tau_p) \right]
\ ,
\label{chiconstlambda}
\end{align}
where $\lambda$ is a positive parameter. 
As $\psi$ is smooth, 
it follows from \eqref{chiconstlambda} that 
$\chi_\lambda$ is smooth. As $\psi$ is non-negative and has support 
$[0, \tau_s]$, it follows from \eqref{chiconstlambda} that 
$\chi_\lambda$ has support 
$[0, 2 \tau_s + \lambda \tau_p]$, 
consisting of the switch-on interval $[0, \tau_s]$,
the plateau $[\tau_s,  \tau_s + \lambda \tau_p]$ 
of duration $\lambda\tau_p$, and the switch-off interval 
$[\tau_s + \lambda \tau_p, 2\tau_s + \lambda \tau_p]$. 
The long time limit is $\lambda\to\infty$. 
Note that the long time limit scales only the duration of the 
plateau but leaves the durations of
the switch-on and switch-off periods fixed. 

The useful quantity to consider is again 
$1/\lambda$ times the response function, which we now denote by 
$\breve{\mathcal{F}}_\lambda$, and for which 
\eqref{Fsstat1} and \eqref{chiconstlambda}
give the formula 
\begin{equation}
\breve{\mathcal{F}}_\lambda(E) 
= \frac{1}{\pi\lambda} \int_{-\infty}^\infty \! d \omega \,
\frac{1-\cos\bigl[(\lambda \tau_p +\tau_s) \omega \bigr]}{\omega^2}
\, \bigl|\widehat{\psi}(\omega)\bigr|^2 
\, 
\What(E + \omega)
\ ,
\label{Flambda2}
\end{equation}
using the relation 
$\widehat{f'}(\omega) = \ii\omega \widehat{f}(\omega)$ that holds for any 
$f\in C^\infty_0(\BbbR)$. 
The long time limit is given by the following proposition. 
\begin{prop}
\label{prop:pointwise-lambdainfty2}
For each fixed $E\in\BbbR$, 
$\breve{\mathcal{F}}_\lambda(E) \to 
\breve{\mathcal{F}}_\infty(E)$ as $\lambda\to\infty$, 
where 
\begin{align}
\breve{\mathcal{F}}_\infty(E) = 
\tau_p \bigl|\widehat{\psi}(0) \bigr|^2 \,\, \widehat{\W}(E)
\ . 
\label{eq:Fglambda-limit2}
\end{align}
\end{prop}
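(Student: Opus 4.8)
The plan is to recognise the kernel in \eqref{Flambda2} as a Fej\'er-type approximate identity and reduce the claim to a standard concentration estimate. Abbreviate $a := \lambda\tau_p+\tau_s$, so that $a\to\infty$ and $a/\lambda = \tau_p + \tau_s/\lambda \to \tau_p$ as $\lambda\to\infty$. Using $1-\cos x = 2\sin^2(x/2)$, set
\begin{align}
K_a(\omega) := \frac{1-\cos(a\omega)}{\pi a\omega^2} = \frac{a}{2\pi}\left(\frac{\sin(a\omega/2)}{a\omega/2}\right)^{\!2}\,,
\end{align}
which is non-negative, satisfies $K_a(\omega)\le 2/(\pi a\omega^2)$, and has $\int_{-\infty}^\infty K_a(\omega)\,d\omega = 1$ for every $a>0$ (substitute $u=a\omega$ and use $\int_{-\infty}^\infty u^{-2}\sin^2 u\,du = \pi$). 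With this notation, \eqref{Flambda2} becomes
\begin{align}
\breve{\mathcal{F}}_\lambda(E) = \frac{a}{\lambda}\int_{-\infty}^\infty K_a(\omega)\,g(\omega)\,d\omega\,, \qquad g(\omega) := \bigl|\widehat{\psi}(\omega)\bigr|^2\,\What(E+\omega)\,,
\end{align}
where $g$ does not depend on $\lambda$. Since $\psi\in C_0^\infty(\BbbR)$, its transform $\widehat{\psi}$ is continuous and decays faster than any inverse power on the real axis, so $|\widehat{\psi}|^2$ is continuous and rapidly decreasing; together with the standing assumption that $\What$ is continuous and polynomially bounded, this makes $g$ continuous, bounded, and absolutely integrable, with $g(0) = |\widehat{\psi}(0)|^2\,\What(E)$.

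Next I would show that $\int_{-\infty}^\infty K_a(\omega)\,g(\omega)\,d\omega \to g(0)$ as $a\to\infty$. Using $\int K_a = 1$, the difference equals $\int K_a(\omega)\bigl(g(\omega)-g(0)\bigr)\,d\omega$, which I split at $|\omega|=\delta$. On $\{|\omega|<\delta\}$ the contribution is at most $\sup_{|\omega|<\delta}|g(\omega)-g(0)|$, which tends to $0$ as $\delta\to0$ by continuity of $g$ at the origin. On $\{|\omega|\ge\delta\}$ the bound $K_a(\omega)\le 2/(\pi a\omega^2)$ gives a contribution at most $\frac{2}{\pi a}\int_{|\omega|\ge\delta}\omega^{-2}\bigl(|g(\omega)|+|g(0)|\bigr)\,d\omega \le \frac{2}{\pi a}\bigl(\delta^{-2}\|g\|_{L^1} + 2\delta^{-1}|g(0)|\bigr)$, which tends to $0$ as $a\to\infty$ for fixed $\delta$. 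Hence $\limsup_{a\to\infty}\bigl|\int K_a g - g(0)\bigr| \le \sup_{|\omega|<\delta}|g(\omega)-g(0)|$ for every $\delta>0$, and the claim follows on letting $\delta\to0$.

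Finally, combining the two displays with $a/\lambda\to\tau_p$ yields $\breve{\mathcal{F}}_\lambda(E)\to\tau_p\,g(0) = \tau_p\,\bigl|\widehat{\psi}(0)\bigr|^2\,\What(E)$, which is \eqref{eq:Fglambda-limit2}. The argument is essentially routine; the only point that deserves a moment's care is the tail estimate, where it is precisely the rapid decay of $\widehat{\psi}$ — overriding the merely polynomial growth allowed for $\What$ — that secures $g\in L^1$ and hence the vanishing of the $|\omega|\ge\delta$ contribution. (Alternatively one could write $\int K_a g$ as a Fej\'er mean of the Fourier inversion integral for $g$ and invoke the classical convergence of such means at points of continuity of an $L^1$ function, but the elementary split above is self-contained.)
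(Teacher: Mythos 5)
Your proof is correct. It rests on exactly the same two facts as the paper's argument --- the kernel $(1-\cos(a\omega))/\omega^2$ is a rescaled Fej\'er-type kernel of total mass $\pi a$ concentrating at the origin, and $g(\omega)=|\widehat\psi(\omega)|^2\What(E+\omega)$ is continuous and bounded --- but you justify the limit differently. The paper substitutes $u=(\lambda\tau_p+\tau_s)\omega$ so that the $\lambda$-dependence moves into the argument of $g$, dominates the integrand by a constant multiple of the fixed integrable function $(1-\cos u)/u^2$ using only the boundedness of $g$, and concludes by dominated convergence; you keep the integral in the $\omega$ variable, normalise the kernel to an approximate identity $K_a$ with $\int K_a=1$, and run the classical $\delta$-split, bounding the near-origin piece by the modulus of continuity of $g$ at $0$ and the tail by $K_a(\omega)\le 2/(\pi a\omega^2)$. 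The paper's route is shorter and needs nothing beyond boundedness of $g$; yours is fully explicit and effectively quantitative, giving an error of order $\sup_{|\omega|<\delta}|g(\omega)-g(0)|+O\bigl((a\delta^2)^{-1}\bigr)$, at the modest cost of invoking $g\in L^1$ in the tail estimate --- which is in fact more than necessary, since boundedness alone gives $\int_{|\omega|\ge\delta}\omega^{-2}|g(\omega)|\,d\omega\le 2\|g\|_\infty/\delta$, so the rapid decay of $\widehat\psi$ that you single out as the crucial point is not actually needed here (it matters elsewhere in the paper, but not for this proposition).
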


\begin{rem}
As $\widehat\psi(0)\ne0$ by the assumptions on~$\psi$, 
$\breve{\mathcal{F}}_\infty$ \eqref{eq:Fglambda-limit2} 
is not identically vanishing. 
\end{rem}

\begin{proof}
Let $E\in\BbbR$ be fixed. 
Writing 
$g_E(\omega) \doteq \bigl| \widehat{\psi}(\omega) \bigr|^2 \, 
\widehat{\W}(\omega + E)$
and changing the integration variable in 
\eqref{Flambda2}
 by $(\lambda \tau_p + \tau_s) \omega = u$, we have 
\begin{equation}
\breve{\mathcal{F}}_\lambda(E) = \frac{1}{\pi} \int_{-\infty}^\infty \! d u \, 
(\tau_p + \tau_s/\lambda) \left(\frac{1-\cos u}{u^2}\right) 
g_E \! \left( \frac{u}{\lambda \tau_p + \tau_s} \right)
\ .
\label{eq:Fglambda-limit2-aux1}
\end{equation}
Since $\psi \in C_0^\infty(\mathbb{R})$ and $\bigl|\What(\omega)\bigr|$ 
is polynomially bounded, 
$|g_E(\omega)|$ is bounded. 
For $\lambda\ge1$, the integrand in 
\eqref{eq:Fglambda-limit2-aux1} 
is hence bounded in absolute value by 
a multiple of the integrable function 
$(1-\cos u)/u^2$, and the limit $\lambda\to\infty$ 
may be taken under the integral by dominated convergence, 
yielding~\eqref{eq:Fglambda-limit2}. 
\end{proof}

\subsubsection{Detailed balance in the long interaction limit}

We may collect the observations in 
Propositions \ref{prop:Wstat-detailed}, 
\ref{prop:pointwise-lambdainfty}
and 
\ref{prop:pointwise-lambdainfty2}
into the following theorem, which establishes the equivalence 
of detailed balance and the KMS condition 
in the long interaction time limit within our switching functions. 

\begin{thm}
\label{theorem:KMS-lambdainfty}
Under the standing assumptions on~$\W$: 
\begin{enumerate}
\item[(i)]
For each fixed $E\in\BbbR$, 
$\widetilde{\mathcal{F}}_\infty(E) 
\doteq \lim_{\lambda\to\infty} \widetilde{\mathcal{F}}_\lambda(E)$ 
and $\breve{\mathcal{F}}_\infty(E) 
\doteq \lim_{\lambda\to\infty} \breve{\mathcal{F}}_\lambda(E)$ 
exist and are proportional to 
$\widehat{\W}(E)$ as given in 
\eqref{eq:Fglambda-limit}
and~\eqref{eq:Fglambda-limit2}; 
\item[(ii)]
$\widetilde{\mathcal{F}}_\infty$ 
and $\breve{\mathcal{F}}_\infty$ 
satisfy the detailed 
balance condition 
\eqref{eq:detailed-balance} 
iff\/ $\W$
satisfies the KMS condition of Definition~\ref{def:KMS}. 
\end{enumerate}
\end{thm}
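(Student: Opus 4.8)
The plan is to obtain the theorem by assembling the three preceding propositions, so the argument is short. For part~(i), the existence of the limits $\widetilde{\mathcal{F}}_\infty(E)=\lim_{\lambda\to\infty}\widetilde{\mathcal{F}}_\lambda(E)$ and $\breve{\mathcal{F}}_\infty(E)=\lim_{\lambda\to\infty}\breve{\mathcal{F}}_\lambda(E)$ for each fixed $E\in\BbbR$, together with the explicit formulas \eqref{eq:Fglambda-limit} and \eqref{eq:Fglambda-limit2}, is precisely the content of Propositions~\ref{prop:pointwise-lambdainfty} and~\ref{prop:pointwise-lambdainfty2}. In particular each limit equals $\What(E)$ times a constant independent of~$E$, namely $\tfrac{1}{2\pi}||\widehat{\chi}||^2$ for $\widetilde{\mathcal{F}}_\infty$ and $\tau_p|\widehat{\psi}(0)|^2$ for $\breve{\mathcal{F}}_\infty$.

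For part~(ii), the key observation I would record is that these two proportionality constants are strictly positive. The first is positive because $\chi\in C_0^\infty(\BbbR)$ is real-valued and nonvanishing somewhere, so $\widehat{\chi}$ is not the zero function and $||\widehat{\chi}||>0$; the second is positive because $\tau_p>0$ and, by the Remark following Proposition~\ref{prop:pointwise-lambdainfty2}, $\widehat{\psi}(0)=\int\psi\,ds>0$ since $\psi\ge0$ is nontrivial. Now for any positive constant $c$ the function $\omega\mapsto c\,\What(\omega)$ satisfies the detailed balance relation \eqref{eq:detailed-balance} at temperature~$1/\beta$ if and only if $\What$ does, because $c$ cancels from both sides of \eqref{eq:detailed-balance}. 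Hence each of $\widetilde{\mathcal{F}}_\infty$ and $\breve{\mathcal{F}}_\infty$ satisfies detailed balance at temperature~$1/\beta$ iff $\What$ does, and by Proposition~\ref{prop:Wstat-detailed} this is in turn equivalent to $\W$ satisfying the KMS condition of Definition~\ref{def:KMS}. This also shows that the two detailed-balance statements in~(ii) are equivalent to each other.

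I do not expect a genuine obstacle: the theorem is essentially a corollary of the earlier results, and the standing assumptions on $\W$ are exactly what is needed to invoke them. The only point deserving a moment's care is the strict positivity of the proportionality constants, for if either vanished the corresponding limit would be identically zero --- which trivially obeys detailed balance --- and the ``only if'' direction of~(ii) would break down. So the crux is simply to note that $\widehat{\chi}\not\equiv0$ and $\widehat{\psi}(0)\ne0$.
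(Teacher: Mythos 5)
Your proposal is correct and follows essentially the same route as the paper, which states the theorem precisely as a collection of Propositions~\ref{prop:Wstat-detailed}, \ref{prop:pointwise-lambdainfty} and~\ref{prop:pointwise-lambdainfty2} without writing out a separate proof. Your added observation that the proportionality constants $\tfrac{1}{2\pi}||\widehat{\chi}||^2$ and $\tau_p|\widehat{\psi}(0)|^2$ are strictly positive (so that detailed balance for the limits is genuinely equivalent to detailed balance for $\What$) is exactly the point the paper covers via the standing assumption that $\chi$ is nonvanishing somewhere and the Remark that $\widehat{\psi}(0)\neq 0$.
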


\subsection{Non-uniformity in $E$ of the long interaction limit}

Theorem \ref{theorem:KMS-lambdainfty}
shows that in a KMS state the long time limits of the response functions 
$\widetilde{\mathcal{F}}_\lambda(E)$ and $\breve{\mathcal{F}}_\lambda(E)$
exist and satisfy the detailed balance condition~\eqref{eq:detailed-balance}. 
The sense of the long time limit in Theorem \ref{theorem:KMS-lambdainfty} 
is $\lambda\to\infty$ with fixed~$E$. 
We now show that the sense in which $\widetilde{\mathcal{F}}_\lambda(E)$ 
and $\breve{\mathcal{F}}_\lambda$ approach the detailed balance 
condition as $\lambda$ increases cannot be uniform in~$E$. 

It is helpful to rewrite the detailed balance condition 
\eqref{eq:detailed-balance} as 
\begin{align}
\beta = \frac{1}{\omega} \ln \! \left( \frac{G(-\omega)}{G(\omega)}\right)
\ ;
\label{DBC}
\end{align}
at $\omega=0$ \eqref{DBC} is understood 
in the limiting sense. 
This shows that if $G$ is positive and 
$G(\omega)$ is bounded above by a polynomial in~$\omega$, 
$G(\omega)$ can satisfy the detailed balance condition at $|\omega|\to\infty$ 
only if $G(\omega)$ is bounded by an exponentially 
decreasing function of~$\omega$ at $\omega\to\infty$. 
The following proposition shows that $\mathcal{F}$ 
\eqref{Fsstat1} cannot decrease this fast under mild technical 
conditions on~$\What$. 
As $\widetilde{\mathcal{F}}_\lambda$ \eqref{eq:Fglambda} and 
$\breve{\mathcal{F}}_\lambda$ \eqref{Flambda2} 
are obtained from \eqref{Fsstat1} with a specific choice for $\chi$, 
the same holds for $\widetilde{\mathcal{F}}_\lambda$ and 
$\breve{\mathcal{F}}_\lambda$ with with any fixed~$\lambda$. 

\begin{prop}
Suppose that, in addition to the standing assumptions on~$\W$, 
there exist positive constants  $b$ and $C$ such that 
$\What(\omega) \ge C$ for $-b\le\omega\le b$. Then 
$\mathcal{F}(E)$ \eqref{Fsstat1} cannot fall off 
exponentially fast as $E\to\infty$. 
\label{prop:key-falloff}
\end{prop}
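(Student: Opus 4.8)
The plan is to argue by contradiction: suppose there are positive constants $K,\gamma$ and $E_0$ with $\mathcal{F}(E)\le K\ee^{-\gamma E}$ for all $E\ge E_0$. The first step extracts information about $\widehat\chi$ from this by positivity: since $|\widehat\chi(\omega)|^2\ge0$ and $\What\ge0$ everywhere, and $\What\ge C$ on $[-b,b]$, restricting the $\omega$-integration in \eqref{Fsstat1} to the set $\{\,-b\le E+\omega\le b\,\}$ gives
\begin{align}
K\ee^{-\gamma E}\ge\mathcal{F}(E)\ge\frac{C}{2\pi}\int_{-E-b}^{-E+b}\bigl|\widehat\chi(\omega)\bigr|^2\,d\omega
\qquad(E\ge E_0)\,.
\end{align}
Writing $t=-E$, this says that the windowed mass $\int_{t-b}^{t+b}|\widehat\chi(\omega)|^2\,d\omega$ decays exponentially as $t\to-\infty$.

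The second step uses that $\chi$ is real-valued, so $\widehat\chi(-\omega)=\overline{\widehat\chi(\omega)}$ and hence $|\widehat\chi|$ is even. Exponential decay of the windowed mass near $-\infty$ is therefore the same as exponential decay near $+\infty$, so for some $K'>0$ one has $\int_{t-b}^{t+b}|\widehat\chi(\omega)|^2\,d\omega\le K'\ee^{-\gamma|t|}$ for all $|t|$ large and, absorbing the behaviour on a bounded $t$-set, for all $t\in\BbbR$. From this I would extract a form of exponential decay of $\widehat\chi$ usable in the next step: either summing the windowed estimate over a lattice of translates to conclude that $\widehat\chi(\omega)\,\ee^{\gamma'|\omega|}\in L^2(\BbbR,d\omega)$ for some $\gamma'>0$, or, using that $\widehat\chi$ and $\widehat\chi'$ are bounded because $\chi\in C_0^\infty(\BbbR)$, promoting the windowed $L^2$ bound to the pointwise bound $|\widehat\chi(\omega)|\le K''\ee^{-\gamma'|\omega|}$ by an elementary interpolation.

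The third step is exactly the argument recorded in the text around \eqref{eq:expdecay}: such exponential decay of $\widehat\chi$ on all of $\BbbR$ makes the Fourier inversion integral $\chi(s)=\tfrac1{2\pi}\int\widehat\chi(\omega)\,\ee^{\ii\omega s}\,d\omega$ converge for $s$ in a strip $|\Imagpart s|<\delta$ with $\delta>0$, extending $\chi$ to a holomorphic function there; since $\chi$ is compactly supported on the real axis it must then vanish identically, contradicting the standing assumption that $\chi$ is nonvanishing somewhere. Hence no such $K,\gamma$ can exist, which is the claim.

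I expect the only mildly delicate point to be the passage from the windowed $L^2$ estimate to a form of decay of $\widehat\chi$ that feeds the Fourier-inversion argument — the lattice-sum bookkeeping, or the interpolation using the bound on $\widehat\chi'$; everything else (the positivity bound, the evenness of $|\widehat\chi|$, and the compact-support/holomorphicity conclusion) is immediate or already in the preliminaries, and the essential idea is simply that real-valuedness of $\chi$ upgrades one-sided decay to two-sided decay. An alternative and slightly slicker route would read \eqref{Fsstat1} as exhibiting $2\pi\mathcal{F}$ as the convolution $|\widehat\chi|^2\star\What$, equal up to a constant to the Fourier transform of the compactly supported distribution $\Psi\W$ with $\Psi=\chi\star\widetilde\chi\in C_0^\infty(\BbbR)$ and $\widehat\Psi=|\widehat\chi|^2$; exponential decay of $\mathcal{F}$ at $+\infty$ together with the compact support of $\Psi\W$ then forces $\Psi\W=0$ by a one-sided boundary-value argument, contradicting $\mathcal{F}(E)\ge\frac{C}{2\pi}\int_{-E-b}^{-E+b}|\widehat\chi(\omega)|^2\,d\omega>0$ (the last integral being positive since $\widehat\chi$ is real-analytic and not identically zero).
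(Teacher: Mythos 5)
Your main argument is correct and is essentially the paper's own proof: the positivity bound obtained by restricting to $E+\omega\in[-b,b]$, the upgrade from one-sided to two-sided decay via the evenness of $|\widehat\chi|$, the lattice sum yielding $\ee^{\gamma'|\omega|}|\widehat\chi|^2\in L^1(\BbbR)$, and the holomorphic-extension/compact-support contradiction are exactly the steps the paper takes (it realises the last step carefully as an $L^2$ pairing to produce the holomorphic extension in a strip). The ``slicker'' alternative sketched at the end is not needed, and its claim that one-sided exponential decay of the Fourier transform of the compactly supported distribution $\Psi\W$ forces $\Psi\W=0$ is not obviously justified as stated --- the two-sided decay supplied by evenness is what makes the primary route work.
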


\begin{proof}
We use proof by contradiction, 
showing that an exponential falloff of $\mathcal{F}(E)$ at $E\to\infty$ 
implies that $\chi$ is identically vanishing. 

As $\What(\omega)$ is positive and polynomially bounded from above,  
$\mathcal{F}(E)$ 
is well defined for all~$E$. By the bound on $\What(\omega)$ from below, we have 
\begin{align}
\mathcal{F}(E) &= \frac{1}{2 \pi} \int_{-\infty}^\infty \! d v \, 
\left|\widehat{\chi}(v-E) \right|^2 \, \What(v)
\notag
\\
& \ge
\frac{1}{2 \pi} \int_{-b}^b \! d v \, 
\left|\widehat{\chi}(v-E)\right|^2 \, \What(v) 
\notag
\\
& \ge 
\frac{C}{2 \pi} \int_{-b}^b \! d v \, 
\left|\widehat{\chi}(v-E)\right|^2
\ , 
\label{eq:est1}
\end{align}
where we have first changed the integration variable by 
$\omega = v -E$, then restricted the integration range to $-b \le v \le b$, 
and finally used the strictly positive bound below on~$\What(v)$. 

Suppose now that $E>0$, 
and suppose there are positive constants $A$ and $\gamma$ such that 
$\mathcal{F}(E) < A \, \ee^{-\gamma E}$. 
From \eqref{eq:est1} we then have 
\begin{align}
\int_{-b}^b \! d v \, 
\left|\widehat{\chi}(v-E)\right|^2 < A' \, \ee^{-\gamma E}
\ , 
\label{eq:est2}
\end{align}
where $A' = 2 \pi A C^{-1} >0$. 
Inserting the factor $\ee^{\gamma(E-v)/2}$
under the integral in \eqref{eq:est2} gives 
\begin{align}
\int_{-b}^b \! d v \, 
\ee^{\gamma(E-v)/2}
\left|\widehat{\chi}(v-E)\right|^2 < A'  \, \ee^{\gamma b/2} \, \ee^{-\gamma E/2}
\ . 
\label{eq:est3}
\end{align}
Setting in \eqref{eq:est3}
$E = (2k+1)b$, $k = 0,1,\ldots$, and summing over~$k$, 
we obtain 
\begin{align}
\int_{-\infty}^0 \! d \omega \, 
\ee^{-\gamma \omega/2}
\left|\widehat{\chi}(\omega)\right|^2 < \infty 
\ , 
\label{eq:est4}
\end{align}
and since $\left|\widehat{\chi}(\omega)\right|$ is even in~$\omega$, it follows that 
\begin{align}
\int_{-\infty}^\infty \! d \omega \, 
\ee^{\gamma |\omega|/2}
\left|\widehat{\chi}(\omega)\right|^2 < \infty 
\ . 
\label{eq:est5}
\end{align}
This implies that $\ee^{\gamma|\omega|/4}\widehat{\chi}(\omega)$ 
is in $L^2(\BbbR, d\omega)$. 

Let now 
\begin{subequations}
\label{eq:est6}
\begin{align}
\Xi(\omega) & \doteq \overline{\widehat{\chi}(\omega)}\left(1+ \ee^{\gamma|\omega|/4}\right)
\ , 
\\
\Phi_z(\omega) & \doteq \frac{\ee^{\ii z \omega}}{1+ \ee^{\gamma|\omega|/4}}
\ , 
\end{align}
\end{subequations}
where $- \gamma/4 < \Imagpart z  < \gamma/4$.
Both $\Xi$ and $\Phi_z$ are in $L^2(\BbbR, d\omega)$. We may hence 
define in the strip $- \gamma/4 < \Imagpart z  < \gamma/4$ the function 
\begin{align}
\tilde\chi(z) \doteq (2\pi)^{-1} (\Xi, \Phi_z)
\ , 
\label{eq:est7}
\end{align}
where $(\cdot , \cdot)$ denotes the inner product in $L^2(\BbbR, d\omega)$. 
A~straightforward computation shows that $\tilde\chi(z)$ is holomorphic in the strip. 

Formulas \eqref{eq:est6} and \eqref{eq:est7} show that 
$\tilde\chi(\tau) = \chi(\tau)$ for $\tau \in \BbbR$. 
Since $\chi$ is by assumption compactly supported on~$\BbbR$, 
the holomorphicity of $\tilde\chi$ implies that $\tilde\chi$ vanishes everywhere. 
As $\chi$ is by assumption nonvanishing somewhere, this provides the sought contradiction. 
\end{proof}

\subsection{Asymptotic thermality with power-law waiting time}

We have seen that while the response of a stationary detector in a 
KMS state satisfies detailed balance in the
long detection time limit, 
both with the adiabatic scaling and the 
plateau scaling of the detection time, 
detailed balance with any fixed detection time 
will not be satisfied when 
the detector's energy gap is sufficiently large. 
This leads to the main question of the paper: 
if we wish detailed balance 
to hold as the energy gap increases, 
how fast does the 
interaction time need to grow 
compared to the energy gap? 

This question is motivated by experimental considerations. 
If the required waiting time increases (say) exponentially in the energy gap, 
we may expect experimental testing of detailed balance 
to become rapidly impractical at large energies. 
A~waiting time that increases only as a power of the energy gap 
is however experimentally more promising even at large energies. 

We formalise a waiting time that increases as a power-law 
in the following definition. 

\begin{defn}
For a stationary detector, let $\lambda$ be a positive 
parameter that is proportional to the 
effective duration of the interaction time, and let 
$\mathcal{F}_\lambda$ be the corresponding response function. 
Suppose there exists a positive function $\lambda(E)$ 
such that $\lambda(E) \to \infty$ as $|E| \to \infty$, 
$\lambda(E)$ is bounded from above by a polynomial, 
and 
for all sufficiently large $|E|$ one has 
\begin{equation}
E \beta - \mathcal{B}^{-}\bigl(E,\lambda(E)\bigr) \leq  
\ln \! \left( \frac{\mathcal{F}_{\lambda(E)}(-E)}{\mathcal{F}_{\lambda(E)}(E)}\right) 
\leq E \beta + \mathcal{B}^{+}\bigl(E,\lambda(E)\bigr)
\ , 
\label{AsympDBC}
\end{equation}
where $\beta$ is a positive constant and 
$\mathcal{B}^{\pm}\bigl(E,\lambda(E)\bigr)/E \rightarrow 0$ as a power-law in $E$ 
as $|E| \rightarrow \infty$. 
We then say that the response is asymptotically thermal 
at temperature $1/\beta$ with power-law waiting time. 
\label{DefAsympKMS}
\end{defn}

In Section \ref{sec:unruh-section} 
we shall examine whether asymptotic thermality 
with power-law waiting time holds for the Unruh effect with our 
adiabatic scaling \eqref{eq:adiab-chi} 
and plateau scaling~\eqref{chiconstlambda}.

\section{Unruh effect\label{sec:unruh-section}} 

In this section we specialise to the Unruh effect 
for a massless scalar field in four spacetime dimensions.  
We first show that 
detailed balance in the long interaction time limit holds 
for fixed energy gap both with our 
adiabatic scaling \eqref{eq:adiab-chi} 
and plateau scaling~\eqref{chiconstlambda}, but 
detailed balance cannot hold for any fixed interaction time 
in the limit of large energy gap. 
We then show that asymptotic thermality with power-law waiting 
time holds for the adiabatic scaling but does not hold for the plateau 
scaling.

\subsection{Long interaction limit at fixed $E$}

We consider four-dimensional Minkowski spacetime, 
a massless scalar field in the 
Minkowski vacuum state, and a detector on the Rindler trajectory of 
uniform linear acceleration. In global Minkowski coordinates $(t,x,y,z)$, 
the trajectory reads 
\begin{align}
t &= a^{-1} \sinh(a \tau)
\ , 
\ \ 
x = a^{-1} \cosh(a \tau)
\ , 
\ \ 
y = y_0
\ , 
\ \ 
z = z_0
\ , 
\label{eq:rindler-trajectory}
\end{align}
where the positive constant $a$ 
is the proper acceleration and $y_0$ and $z_0$ are constants. 
We then have \cite{Unruh:1976db,Birrell:1982ix,Wald:1995yp,DeWitt:1979}
\begin{align}
\W(s) 
= \lim_{\epsilon\to0+}
\left(
- \frac{a^2}{16 \pi^2 \sinh^2\bigl(a(s-\ii\epsilon)/2\bigr)}
\right) 
\ , 
\label{eq:rindler-wightman}
\end{align}
where the limit indicates the
boundary value in the sense of Definition~\ref{def:KMS}. 
As is well known, 
$\W$ \eqref{eq:rindler-wightman} coincides with what is obtained by  
starting with the Wightman function of the quasifree thermal state of temperature 
$a/(2\pi)$ 
and pulling it back to a worldline that is  
static in the corresponding inertial frame~\cite{Takagi:1986kn}. 

$\W$ \eqref{eq:rindler-wightman} satisfies 
the KMS condition of Definition \ref{def:KMS} with $\beta= 2\pi/a$. 
The Fourier transform of $\W$ 
is given by \cite{Unruh:1976db,Birrell:1982ix,DeWitt:1979}
\begin{align}
\What(\omega) = \frac{\omega}{2\pi \bigl(\ee^{2\pi \omega /a} -1\bigr)}
\ , 
\label{eq:Unruh-hatWstat}
\end{align}
understood at $\omega=0$ in the limiting sense. 
It is clear from \eqref{eq:Unruh-hatWstat} that 
$\What$ satisfies the detailed balance 
condition \eqref{eq:detailed-balance} with $\beta= 2\pi/a$: 
this had to happen by Proposition \ref{prop:Wstat-detailed} since 
$\W$ satisfies the standing assumptions 
introduced in Section~\ref{sec:statdet-KMS}.

The response of a detector in the 
long interaction time limit with the 
adiabatic scaling \eqref{eq:adiab-chi} 
and the plateau scaling \eqref{chiconstlambda}
is obtained from 
Theorem~\ref{theorem:KMS-lambdainfty}.   
$\widetilde{\mathcal{F}}_\infty(E)$ and 
$\breve{\mathcal{F}}_\infty(E)$ are hence multiples of~$\What(E)$, 
and they satisfy the detailed balance condition
in the 
Unruh temperature $T_\text{U} = a/(2\pi)$. 
This is the Unruh effect~\cite{Unruh:1976db}. 

However, $\widetilde{\mathcal{F}}_\lambda(E)$ 
and $\breve{\mathcal{F}}_\lambda(E)$ 
cannot satisfy detailed balance
at large $|E|$ 
for any fixed~$\lambda$. 
This follows from Proposition 
\ref{prop:key-falloff} and the observation that 
${\mathcal{F}}(E)$ \eqref{Fsstat1}
is bounded from above by a polynomial in~$E$, 
which we verify in Appendix~\ref{app:polboundFUnruh}. 

We shall now examine whether 
$\widetilde{\mathcal{F}}_\lambda(E)$ 
and 
$\breve{\mathcal{F}}_\lambda(E)$
approach detailed balance at large $|E|$ 
in waiting time that grows as a power-law in~$E$, 
in the sense of Definition~\ref{DefAsympKMS}.

\subsection{Waiting for Unruh with adiabatic scaling}
\label{subsec:slowswitching}

For the switching functions \eqref{eq:adiab-chi} with adiabatic scaling, 
\eqref{eq:Fglambda} and \eqref{eq:Unruh-hatWstat} give 
\begin{align}
\widetilde{\mathcal{F}}_\lambda(E)
= \frac{1}{(2 \pi)^2} \int_{-\infty}^{\infty} \! d \omega \, 
\left|\widehat{\chi}(\omega)\right|^2 
\left( \frac{E+\omega/\lambda}{\ee^{2 \pi(E + \omega/\lambda)/a} - 1}\right)
\ , 
\label{eq:tildeFadiab}
\end{align}
while \eqref{eq:Fglambda-limit} and \eqref{eq:Unruh-hatWstat} give 
\begin{align}
\widetilde{\mathcal{F}}_\infty(E) = 
\frac{|| \widehat{\chi} ||^2}{{(2\pi)}^2 }
\frac{E}{\bigl(\ee^{2\pi E /a} -1\bigr)} 
\ . 
\label{eq:tildeFadiab-infty}
\end{align}
Can $\widetilde{\mathcal{F}}_\lambda(E)$ approach detailed balance 
at large $|E|$ in waiting time that grows as a power-law in~$E$? 

As a first step, we note that \eqref{eq:tildeFadiab-infty}
and Lemma \ref{Lem2} in Appendix \ref{app:auxiliary-main1} give the
inequalities 
\begin{equation}
\frac{E}{T_\text{U}} - \mathcal{B}(E,\lambda) 
\leq 
\ln \! \left(\frac{\widetilde{\mathcal{F}}_\lambda(-E)}{\widetilde{\mathcal{F}}_\lambda(E)} \right) 
\leq \frac{E}{T_\text{U}} + \mathcal{B}(-E,\lambda)
\ ,
\label{Bgen-used}
\end{equation}
where
\begin{align}
\mathcal{B}(E,\lambda) = 
\ln \! \left(1 + 
\frac{|| \omega \widehat{\chi} ||^2}{|| \widehat{\chi} ||^2}
\frac{ \left( \ee^{2\pi E/a} -1 \right)}{(6 a E/\pi) \lambda^2} \right)
\ . 
\label{Bgen}
\end{align}
These estimates are not sufficient to establish asymptotic 
thermality in power-law waiting time: 
from \eqref{Bgen} we see that a power-law $\lambda(E)$ 
implies $\mathcal{B}\bigl(E,\lambda(E)\bigr)/E \to 0$ as 
$E\to-\infty$, but achieving $\mathcal{B}\bigl(E,\lambda(E)\bigr)/E \to 0$ as 
$E\to\infty$ would require $\lambda(E)$ 
to grow exponentially in~$E$. 
However, in the following definition and theorem we show that 
there exists a class of switching 
functions for which the estimate in \eqref{Bgen} can be improved
and asymptotic thermality in power-law waiting time achieved. 

\begin{defn}
If $\psi \in C_0^\infty(\mathbb{R})$ 
satisfies 
\begin{align}
|\widehat{\psi}(\omega)| 
\leq C(B+|\omega|)^r 
\exp(-A |\omega|^q)
\label{eq:strong-fourier-decay-def}
\end{align}
for
some constant  
$q \in (0,1)$ and some 
some positive constants 
$A$, $B$ and~$C$, 
we say that 
$\psi$ has strong Fourier decay. 
\label{RapidFourier}
\end{defn}

\begin{rem}
For any $q\in(0,1)$, 
$C_0^\infty(\mathbb{R})$ contains non-negative 
functions that are not identically vanishing and 
have strong Fourier decay 
with \eqref{eq:strong-fourier-decay-def}~\cite{ingham,Fewster:2015hga}. 
\end{rem}

\begin{thm}
Suppose the switching function 
$\chi$ has strong Fourier decay, satisfying 
\begin{equation}
|\widehat{\chi}(\omega)| \leq C\kappa^{-1}(B+|\omega/\kappa|)^r \exp(-A |\omega/\kappa|^q)
\label{eq:Main1-inputest}
\end{equation}
for positive constants $A$, $B$, $C$, $r$ and $\kappa$ and 
a constant $q \in (0,1)$. 
Then $\widetilde{\mathcal{F}}_\lambda$ \eqref{eq:tildeFadiab} is asymptotically thermal 
with power-law waiting time, 
and we may choose $\lambda(E) = (2 \pi |E|/a)^{1+p}$ 
where $p > q^{-1} - 1$.
\label{thm:Main1}
\end{thm}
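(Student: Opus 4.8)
The plan is to substitute the candidate waiting time $\lambda(E)=(2\pi|E|/a)^{1+p}$ into Definition~\ref{DefAsympKMS}, with $\beta=1/T_\text{U}=2\pi/a$. This $\lambda(E)$ is positive, bounded above by a polynomial, and $\to\infty$ as $|E|\to\infty$, so all that remains is to verify \eqref{AsympDBC}. Since $\lambda(E)$ depends only on $|E|$, the reflection $E\mapsto-E$ turns the required two-sided bound for $E<0$ into the same statement for $E>0$ with the roles of $\mathcal{B}^+$ and $\mathcal{B}^-$ interchanged, so it suffices to treat $E\to+\infty$ and then take $\mathcal{B}^\pm$ to be a common constant. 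For $E>0$ the upper inequality of \eqref{AsympDBC} is already contained in \eqref{Bgen-used}: there $\mathcal{B}(-E,\lambda)$ from \eqref{Bgen} is $\le\ln\bigl(1+(\text{const})/(E\lambda^2)\bigr)$, which for $\lambda=\lambda(E)$ is $O(E^{-3-2p})$, hence $o(E)$ as a power law. So the heart of the matter is to replace the \emph{lower} inequality of \eqref{Bgen-used} --- which, as noted just after \eqref{Bgen}, degenerates as $E\to+\infty$ for any polynomial $\lambda$ --- by an estimate $\ln\bigl(\widetilde{\mathcal{F}}_{\lambda(E)}(-E)/\widetilde{\mathcal{F}}_{\lambda(E)}(E)\bigr)\ge\beta E-\text{const}$. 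Equivalently, I must produce a sharp upper bound $\widetilde{\mathcal{F}}_{\lambda(E)}(E)\le P(E)\,\ee^{-\beta E}$ with $P$ polynomial, together with a lower bound $\widetilde{\mathcal{F}}_{\lambda(E)}(-E)\ge c\,E$; the logarithm of the quotient then produces $\beta E$ up to the bounded quantity $\ln\bigl(cE/P(E)\bigr)$.

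The lower bound on $\widetilde{\mathcal{F}}_{\lambda(E)}(-E)$ is easy: the integrand in \eqref{eq:tildeFadiab} is non-negative, so restricting the $\omega$-integral to $|\omega|\le\tfrac12\lambda(E)E$ --- where $-E+\omega/\lambda\le-E/2<0$, so the Planckian factor $x/(\ee^{2\pi x/a}-1)$ at $x=-E+\omega/\lambda$ is $\ge|x|\ge E/2$ --- and using $\int_{|\omega|\le\lambda(E)E/2}|\widehat{\chi}|^2\to||\widehat{\chi}||^2>0$ gives $\widetilde{\mathcal{F}}_{\lambda(E)}(-E)\ge(\text{const})\cdot E$ for $E$ large. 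The substantive step is the upper bound on $\widetilde{\mathcal{F}}_{\lambda(E)}(E)$. I would split the integral in \eqref{eq:tildeFadiab} at a threshold $\rho=\rho(E)$ chosen so that $A(\rho/\kappa)^q\ge\beta E$, i.e.\ $\rho(E)=\kappa(\beta E/A)^{1/q}\sim E^{1/q}$. On the core region $|\omega|\le\rho$, the fact that $\rho(E)/\lambda(E)\sim E^{1/q-1-p}\to0$ --- exactly where the hypothesis $p>q^{-1}-1$ enters --- keeps $E+\omega/\lambda$ within $[E-1,E+1]$ for $E$ large, so the Planckian factor is $\le(\text{const})\cdot(E+1)\ee^{-\beta E}$, and after bounding $\int_{|\omega|\le\rho}|\widehat{\chi}|^2$ by $||\widehat{\chi}||^2$ the core contributes at most $(\text{const})\cdot(E+1)\ee^{-\beta E}$. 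On the tail region $|\omega|>\rho$, I would bound the Planckian factor crudely by $|E+\omega/\lambda|+a/(2\pi)\le(\text{const})\cdot(E+1)(1+|\omega|)$, and use \eqref{eq:Main1-inputest} together with $\ee^{-2A(|\omega|/\kappa)^q}\le\ee^{-A(\rho/\kappa)^q}\ee^{-A(|\omega|/\kappa)^q}$ for $|\omega|>\rho$; the residual $\omega$-integral converges (polynomial times $\ee^{-A(|\omega|/\kappa)^q}$, $q>0$), while the prefactor $\ee^{-A(\rho/\kappa)^q}\le\ee^{-\beta E}$ by the choice of $\rho$, so the tail also contributes at most $(\text{const})\cdot(E+1)\ee^{-\beta E}$. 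Hence $\widetilde{\mathcal{F}}_{\lambda(E)}(E)\le(\text{const})\cdot(E+1)\ee^{-\beta E}$.

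Combining the two bounds yields $\ln\bigl(\widetilde{\mathcal{F}}_{\lambda(E)}(-E)/\widetilde{\mathcal{F}}_{\lambda(E)}(E)\bigr)\ge\beta E-\text{const}$ for $E$ large and positive; together with the upper inequality from \eqref{Bgen-used} and the reflection to $E\to-\infty$, this establishes \eqref{AsympDBC} with both $\mathcal{B}^\pm$ taken to be a single fixed constant, which trivially satisfies $\mathcal{B}^\pm(E,\lambda(E))/E\to0$ as a power law, completing the verification against Definition~\ref{DefAsympKMS}. I expect the main obstacle to be the two-region split: the cut-off $\rho(E)$ must be \emph{large enough}, of order $E^{1/q}$, for the strong Fourier decay \eqref{eq:Main1-inputest} to beat $\ee^{-\beta E}$ on the tail, yet \emph{small enough relative to} $\lambda(E)$ that on the core $E+\omega/\lambda$ does not drift appreciably away from $E$ --- otherwise one recovers only $\ee^{-\beta E/2}$ (or worse) rather than the sharp $\ee^{-\beta E}$, which would fail to give $\mathcal{B}^\pm/E\to0$. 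These two requirements are compatible precisely when $1+p>q^{-1}$, i.e.\ $p>q^{-1}-1$; this is the sole place where the trade-off between the Fourier-decay exponent $q$ of $\chi$ and the growth exponent $1+p$ of the waiting time is exploited. A secondary point of care is ensuring the constants in the core estimate are genuinely $E$-independent, which needs $\rho(E)/\lambda(E)$ to be not merely $o(1)$ but eventually $\le1$, as it is for this choice.
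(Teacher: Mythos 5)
Your proposal is correct, but it takes a noticeably more elementary route than the paper's. The paper's proof reduces, exactly as you do, to bounding $\ln\bigl(\widetilde{\mathcal{F}}_{\lambda(E)}(-E)/\widetilde{\mathcal{F}}_{\lambda(E)}(E)\bigr)$ from below as $E\to+\infty$, but it does so by comparison with the exact limit $\widetilde{\mathcal{F}}_\infty$: the lower bound $\widetilde{\mathcal{F}}_{\lambda(E)}(-E)\ge\widetilde{\mathcal{F}}_\infty(-E)$ comes for free from the positivity of the symmetrised integrand (Lemma~\ref{Lem2}), and the substantive work (Lemma~\ref{Lem-n1}) is a sharp upper bound $\widetilde{\mathcal{F}}_{\lambda(E)}(E)\le\widetilde{\mathcal{F}}_\infty(E)+(2\pi E/a)\ee^{-2\pi E/a}\mathcal{G}_{\rm est}(E)$ with $\mathcal{G}_{\rm est}(E)\sim E^{-(1+p-q^{-1})/2}$, obtained from a three-region split at $\kappa$ and $E\lambda$ with a further sub-split of the middle region at $\omega_0\sim E^{(1+p+q^{-1})/2}$; combining with the exact detailed balance of $\widetilde{\mathcal{F}}_\infty$ then yields a remainder $\mathcal{B}^-(E)$ that actually \emph{decays} as a power of $E$. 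You instead bound $\widetilde{\mathcal{F}}_{\lambda(E)}(-E)$ and $\widetilde{\mathcal{F}}_{\lambda(E)}(E)$ directly by their leading asymptotic forms up to multiplicative constants, using a two-region split at $\rho(E)\sim E^{1/q}$; the hypothesis $p>q^{-1}-1$ enters in the same way in both arguments (your $\rho/\lambda\to0$ versus the paper's $\omega_0/\lambda\to0$ and $A(\omega_0/\kappa)^q\gg E$), and your choice $A(\rho/\kappa)^q=\beta E$ is exactly the balancing the paper performs at $\omega_0$. What you lose is quantitative sharpness: your $\mathcal{B}^-$ is only a bounded constant rather than $o(1)$, which still satisfies Definition~\ref{DefAsympKMS} since only $\mathcal{B}^\pm/E\to0$ as a power law is demanded, but gives no rate of approach to detailed balance. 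What you gain is a considerably shorter argument that avoids the delicate pointwise analysis of $g(u,v)=\ee^u f(u,v)$ in Appendix~\ref{app:auxiliary-main1}. All the individual estimates you sketch (the monotonicity of $x/(\ee^{\beta x}-1)$, the lower bound $|x|/(1-\ee^{-\beta|x|})\ge|x|$, the splitting of $\ee^{-2A(|\omega|/\kappa)^q}$ on the tail) check out, and your reflection argument for $E\to-\infty$ and the disposal of the upper inequality via \eqref{Bgen-used} are sound.
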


\begin{rem}
The constant $\kappa$ has been included in \eqref{eq:Main1-inputest} 
for dimensional convenience: 
$\kappa$~has the physical dimension of inverse length whereas 
the other constants are dimensionless. 
\end{rem}

\begin{proof}
By the remarks below~\eqref{Bgen}, 
it suffices to find 
$\mathcal{B}^{-}\bigl(E,\lambda(E)\bigr)$ for   
\eqref{AsympDBC} as $E\to\infty$. 

Let $\lambda(E)$ be as stated in the Theorem, let 
$E>0$, and let 
\begin{equation}
\mathcal{G}_\text{est}(E) \doteq 
\frac{2 a ||\widehat{\chi}||^2}{(2 \pi)^3} 
\left(\frac{2\pi E}{a}\right)^{-(1+p-q^{-1})/2} 
\ . 
\label{eq:G-est-def}
\end{equation}
We show in Lemma \ref{Lem-n1} in Appendix \ref{app:auxiliary-main1}
that
\begin{align}
\widetilde{\mathcal{F}}_{\lambda(E)}(E)  
\leq \widetilde{\mathcal{F}}_\infty(E) + \frac{2 \pi E}{a}\exp( -2\pi E /a) 
\mathcal{G}_\text{est}(E)
\label{eq:G-est-used}
\end{align}
holds for sufficiently large~$E$. 
Combining this with the inequality 
$\widetilde{\mathcal{F}}_\infty(-E) \leq \widetilde{\mathcal{F}}_{\lambda(E)}(-E)$
from Lemma~\ref{Lem2}, and using 
$\widetilde{\mathcal{F}}_\infty(E) = \ee^{E/T_{\text{U}}} \widetilde{\mathcal{F}}_\infty(-E)$, 
we find that for sufficiently large~$E$, 
\begin{align}
\ln \! 
\left(\frac{\widetilde{\mathcal{F}}_{\lambda(E)}(-E)}{\widetilde{\mathcal{F}}_{\lambda(E)}(E)} \right) 
\geq 
\frac{E}{T_{\text{U}}} - \mathcal{B}^{-}(E)
\ , 
\label{B-thm-ineq}
\end{align}
where 
\begin{align}
\mathcal{B}^{-}(E) & =  \ln \! 
\left(
1 + \frac{(2\pi)^3 \bigl(1 - \ee^{-2\pi E/a} \bigr) 
\mathcal{G}_\text{est}(E)}{a || \widehat\chi ||^2} 
\right)  
\nonumber 
\\[1ex]
& = 2 \left(\frac{2\pi E}{a} \right)^{-(1 + p - q^{-1})/2} + 
O \! \left( ( E/a)^{-(1 + p - q^{-1})}\right) 
\ . 
\label{B-thm}
\end{align}
Since $\mathcal{B}^{-}(E)/E$ has a power-law falloff at $E\to\infty$, 
\eqref{B-thm-ineq} provides 
the sought lower bound in~\eqref{AsympDBC}. 
\end{proof}

\subsection{Waiting for Unruh with plateau scaling}
\label{subsec:plateau}

For the switching functions \eqref{chiconstlambda} with plateau scaling, 
\eqref{Flambda2} and \eqref{eq:Unruh-hatWstat} give 
\begin{equation}
\breve{\mathcal{F}}_\lambda(E) 
= \frac{1}{2\pi^2\lambda} \int_{-\infty}^\infty \! d \omega \,
\frac{1-\cos\bigl[(\lambda \tau_p + \tau_s) \omega \bigr]}{\omega^2}
\, \bigl|\widehat{\psi}(\omega)\bigr|^2 
\left( \frac{E+\omega}{\ee^{2 \pi(E + \omega)/a} - 1}\right)
\ ,
\label{Flambda2Unruh}
\end{equation}
while \eqref{eq:Fglambda-limit2} and \eqref{eq:Unruh-hatWstat} give 
\begin{align}
\breve{\mathcal{F}}_\infty(E) = 
\frac{\tau_p}{2\pi} \bigl|\widehat{\psi}(0) \bigr|^2
\frac{E}{\bigl(\ee^{2\pi E /a} -1\bigr)}
\ . 
\label{FT2}
\end{align}
In the following theorem we provide a sense in which 
$\breve{\mathcal{F}}_\lambda(E)$ cannot approach 
detailed balance 
at large $|E|$ in waiting time that grows as a power-law in~$E$. 

\begin{thm}
Let $P: \mathbb{R}^+ \rightarrow \mathbb{R}^+$ 
be differentiable, strictly increasing and polynomially bounded,
and let $P(E) \to \infty$ as $E\to\infty$.  
Then $\breve{\mathcal{F}}_{P(|E|)}(E)$ is not asymptotically thermal 
with power-law waiting time in the sense of Definition~\ref{DefAsympKMS}. 
\label{thm:Main2}
\end{thm}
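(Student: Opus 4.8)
The plan is to argue by contradiction. Suppose $\breve{\mathcal{F}}_{P(|E|)}$ \emph{were} asymptotically thermal with power-law waiting time; by Theorem~\ref{theorem:KMS-lambdainfty} (applied to the $\lambda\to\infty$ limit) the temperature must be the Unruh temperature, so $\beta=2\pi/a$. Since $\What$ obeys detailed balance, $\What(-\omega)=\ee^{\beta\omega}\What(\omega)$ by \eqref{eq:Unruh-hatWstat}, and since $\breve{\mathcal{F}}_\lambda(-E)$ is bounded above by a polynomial in $E$ \emph{uniformly in $\lambda\ge1$} (this follows from the argument of Appendix~\ref{app:polboundFUnruh} once one keeps track of the $\lambda$-dependence through $|\widehat{\chi_\lambda}(\omega)|^2=2\omega^{-2}\bigl(1-\cos[(\lambda\tau_p+\tau_s)\omega]\bigr)|\widehat{\psi}(\omega)|^2\le\min\bigl(4\omega^{-2},(\lambda\tau_p+\tau_s)^2\bigr)|\widehat{\psi}(\omega)|^2$), the lower inequality in \eqref{AsympDBC} collapses to the single requirement
\[
\breve{\mathcal{F}}_{P(E)}(E)\ \le\ \exp\!\bigl(-\beta E+\rho(E)\bigr),\qquad \rho(E)/E\to 0,\quad\text{for all large }E .
\]
The rest of the argument is devoted to showing that $\breve{\mathcal{F}}_{P(E)}(E)$ cannot in fact decay this fast.

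For the lower bound I would use \eqref{Flambda2Unruh}, whose integrand is non-negative, and restrict the $\omega$-integral to $\omega<-E$, where $\What(E+\omega)\ge\What(0)=a/(4\pi^2)>0$ (indeed $\What\ge\What(0)$ on $(-\infty,0]$). Writing $T=P(E)\tau_p+\tau_s$ and bounding $1/\omega^2$ from below on the fixed-length interval $[-E-b,-E+b]$, this gives, for any fixed $b>0$ and all large $E$,
\[
\breve{\mathcal{F}}_{P(E)}(E)\ \gtrsim\ \frac{1}{P(E)\,E^2}\int_{-E-b}^{-E+b}\bigl(1-\cos(T\omega)\bigr)\,\bigl|\widehat{\psi}(\omega)\bigr|^2\,d\omega .
\]
Two further inputs are needed. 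First, $\widehat{\psi}$ is the Fourier transform of a function supported in an interval of length $\tau_s$, hence band-limited, so the fine oscillation $1-\cos(T\omega)$, which lives at scale $1/T\ll1/\tau_s$ once $E$ (hence $T$) is large, cannot be cancelled by $|\widehat{\psi}|^2$; I would record this as a Logvinenko--Sereda--type lemma, $\int_I\bigl(1-\cos(T\omega)\bigr)|\widehat{\psi}(\omega)|^2\,d\omega\ge c_0\int_I|\widehat{\psi}(\omega)|^2\,d\omega$ for all intervals $I$ and all $T\ge T_0$, to be proved in an appendix (via Bernstein's inequality and the fact that $\{1-\cos(T\omega)\ge1\}$ is $\tfrac12$-dense at scale $\sim1/T$). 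Second, since $\psi\not\equiv0$, the reasoning behind Lemma~\ref{lem:g-S} shows that $\ee^{\gamma|\omega|}\widehat{\psi}\notin L^2(\BbbR)$ for every $\gamma>0$, equivalently $\int_0^\infty\ee^{2\gamma\omega}|\widehat{\psi}(\omega)|^2\,d\omega=\infty$ for every $\gamma>0$.

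To close the loop I would put $E_k=(2k+1)b$, so that the intervals $[-E_k-b,-E_k+b]$ tile $(-\infty,0]$. The displayed decay bound together with the lower bound and the Logvinenko--Sereda lemma give $\int_{-E_k-b}^{-E_k+b}|\widehat{\psi}(\omega)|^2\,d\omega\lesssim P(E_k)E_k^2\,\ee^{-\beta E_k+\rho(E_k)}$ for all large $k$. Multiplying by $\ee^{\mu E_k}$ with any fixed $\mu\in(0,\beta)$ and summing over $k$, the polynomial bound on $P$ and $\rho(E)=o(E)$ make the series converge, so $\int_{-\infty}^0\ee^{\mu|\omega|}|\widehat{\psi}(\omega)|^2\,d\omega<\infty$; by evenness of $|\widehat{\psi}|$ this gives $\ee^{\mu|\omega|/2}\widehat{\psi}\in L^2(\BbbR)$ for every $\mu\in(0,\beta)$, contradicting the second input (equivalently, forcing $\psi$ to extend holomorphically to a strip about $\BbbR$ and hence to vanish identically, contrary to $\psi$ being a genuine bump function).

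I expect the Logvinenko--Sereda input to be the main obstacle: one must control the oscillatory factor $1-\cos[(\lambda\tau_p+\tau_s)\omega]$ against $|\widehat{\psi}(\omega)|^2$ \emph{uniformly} over all the far-out intervals $[-E_k-b,-E_k+b]$ and all large $\lambda$, in a regime where $\int_{-E_k-b}^{-E_k+b}|\widehat{\psi}|^2$ may itself be far smaller than $1/P(E_k)$, so no \emph{absolute} error bound on the cosine term is good enough and the band-limitedness of $\widehat{\psi}$ must be exploited. This is exactly where the \emph{fixed} (rather than $\lambda$-stretched) switch-on/switch-off duration enters: it makes $|\widehat{\chi_\lambda}(\omega)|^2$ fall off only like $\omega^{-2}|\widehat{\psi}(\omega)|^2$ at high $\omega$, too slowly for the bare contour-deformation argument of Proposition~\ref{prop:key-falloff} to apply on its own, so band-limitedness has to be brought in to compensate.
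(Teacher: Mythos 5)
Your overall architecture parallels the paper's: both arguments reduce the theorem to showing that $\breve{\mathcal{F}}_{P(E)}(-E)$ grows at most polynomially while $\ee^{2\pi E/a}\breve{\mathcal{F}}_{P(E)}(E)$ cannot stay as small as detailed balance would require, and both extract the contradiction by localising the integral \eqref{Flambda2Unruh} near $\omega\approx -E$ and concluding that $\widehat{g}=|\widehat{\psi}|^2$ would have to decay exponentially, incompatibly with the compact support of $\psi$. The reduction via a uniform-in-$\lambda$ polynomial bound on $\breve{\mathcal{F}}_\lambda(-E)$, the lower bound $\What(E+\omega)\ge a/(4\pi^2)$ for $E+\omega\le 0$, and the geometric summation over $E_k=(2k+1)b$ leading to $\int \ee^{\mu|\omega|}|\widehat{\psi}(\omega)|^2\,d\omega<\infty$ are all sound and correspond respectively to the paper's boundedness of $H(\mathcal{E})$ (via Lemma~\ref{Lem1}) and to the proof of Proposition~\ref{prop:key-falloff}. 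The genuine gap is the ``Logvinenko--Sereda--type'' lemma, which you correctly identify as the main obstacle but then simply posit. What you need is a \emph{relative} lower bound, $\int_I\bigl(1-\cos(T\omega)\bigr)|\widehat{\psi}|^2\ge c_0\int_I|\widehat{\psi}|^2$ with $c_0$ uniform over the far-out intervals $I_E$ and the polynomially growing frequencies $T=P(E)\tau_p+\tau_s$, in a regime where $\int_{I_E}|\widehat{\psi}|^2$ is superpolynomially small. The proof sketch you offer does not deliver this: Bernstein's inequality controls $\|\widehat{g}'\|_\infty$ by $\tau_s\|\widehat{g}\|_{L^\infty(\BbbR)}$, a \emph{global} supremum of order one, so every estimate it yields for the oscillatory term is an absolute error of size $O(1/T)$ --- for instance, writing $\int_I \ee^{\ii T\omega}\widehat{g}(\omega)\,d\omega=\int_{-\tau_s}^{\tau_s}g(t)\,\ee^{\ii c(T-t)}\,2\sin\bigl(b(T-t)\bigr)(T-t)^{-1}\,dt=O(\|g\|_1/T)$ --- and, as you note yourself, no absolute bound can compete with $\int_{I_E}|\widehat{\psi}|^2$ in the tail. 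Ruling out the residual scenario in which the normalised restriction of $|\widehat{\psi}|^2$ to $I_E$ concentrates near the zero set of $1-\cos(T\omega)$ is exactly the hard content of the theorem, and nothing in your sketch addresses it.

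It is instructive that the paper does \emph{not} prove any such uniform local lower bound. It concedes that the bound fails near the zeros of the oscillatory factor, quantifies the failure through $Q(\mathcal{E})=\inf_{\Omega\in J_{\mathcal{E}}}\sin^2\bigl(\Lambda(\mathcal{E})\Omega/2\bigr)$ on a \emph{shrinking} window $J_{\mathcal{E}}$ of length $\ee^{-\mathcal{E}/2}$ (so that at most one zero is in play), shows that the exceptional set $T^c$ where $Q$ is exponentially small is confined to exponentially shrinking neighbourhoods of a polynomially spaced sequence (Lemma~\ref{LemMain2}), and then closes the argument with the sharpened uniqueness statement Lemma~\ref{lem:g-S}: exponential decay of $\widehat{g}$ off a \emph{modest} set still forces $g\equiv 0$. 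Some mechanism of this kind --- either the modest-set bookkeeping, or a genuinely proved local Remez/Tur\'an-type estimate with at worst polynomial-in-$T$ loss (which your final summation could absorb) --- must replace the assertion at the heart of your proposal before it constitutes a proof.
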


\begin{proof}
Adapting the proof of Lemma~\ref{Lem2}, we see that 
$\breve{\mathcal{F}}_{P(|E|)}(E) \ge \breve{\mathcal{F}}_\infty(E) > 0$ 
for all~$E$. We shall show that $E^{-1} \breve{\mathcal{F}}_{P(|E|)}(-E)$ 
is bounded as $E\to\infty$
but $E^{-1} \ee^{2\pi E/a} \breve{\mathcal{F}}_{P(|E|)}(E)$ 
is not bounded as $E\to\infty$. 
From this it follows that 
$\ee^{2\pi E/a} \breve{\mathcal{F}}_{P(|E|)}(E) / \breve{\mathcal{F}}_{P(|E|)}(-E)$ 
is not bounded as $E\to\infty$, and 
$\breve{\mathcal{F}}_{P(|E|)}(E)$ hence does not 
have the property of Definition~\ref{DefAsympKMS}. 

As preparation, we define the function $g \in C_0^\infty(\BbbR)$ by 
\begin{align}
g(\tau) \doteq \int_{-\infty}^\infty \! dt \, 
\psi( \tau - t) \psi(-t)
\ . 
\end{align}
By the convolution theorem, 
$\widehat{g}(\omega) = \widehat{\psi}(\omega) \widehat{\psi}(-\omega) 
= \bigl| 
\widehat{\psi}(\omega)
\bigr|^2$. It follows that 
$\widehat g$ is non-negative and even and has rapid decrease. 

Suppose now $E>0$. 
From \eqref{Flambda2Unruh} we have 
\begin{subequations}
\label{eq:plateau-prelim-both}
\begin{align}
\frac{\breve{\mathcal{F}}_{P(E)}(-E)}{2\pi E/a} 
& = 
\frac{2a}{(2\pi)^3} \left( \tau_p +  \frac{\tau_s}{P(E)} \right) 
H(2\pi E/a)
\ , 
\\[1ex]
\frac{\exp(2 \pi E/a)}{2\pi E/a} \breve{\mathcal{F}}_{P(E)}(E) 
& = 
\frac{2a}{(2\pi)^3} \left( \tau_p +  \frac{\tau_s}{P(E)} \right) 
I(2\pi E/a)
\ , 
\end{align}
\end{subequations}
where 
\begin{subequations}
\label{eq:plateau-both}
\begin{align}
H(\mathcal{E})
& \doteq \frac{1}{\mathcal{E}}
\int_{-\infty}^{\infty} 
\! du \, 
\frac{1- \cos u}{u^2} \, 
\widehat{g} \! \left(\frac{au}{2\pi\Lambda(\mathcal{E})}\right) 
\left( 
\frac{u/\bigl(\Lambda(\mathcal{E}) \bigr) 
- \mathcal{E}}{\ee^{u/(\Lambda(\mathcal{E})) - \mathcal{E}} - 1}\right)
\ ,
\label{eq:plateau-bounded}
\\
I(\mathcal{E})
& \doteq \frac{\ee^{\mathcal{E}}}{\mathcal{E} \Lambda(\mathcal{E}) }
\int_{-\infty}^{\infty} \! d\Omega \, 
\frac{1- \cos\bigl(\Lambda(\mathcal{E}) \Omega\bigr)}{\Omega^2}  
\, 
\widehat{g} \! \left(\frac{a \Omega}{2\pi}\right) 
\left( \frac{\Omega + \mathcal{E}}{\ee^{\Omega + \mathcal{E}} - 1}\right)
\ , 
\label{eq:plateau-unbounded}
\end{align}
\end{subequations}
and 
\begin{align}
\Lambda(\mathcal{E}) 
\doteq  
a  (2\pi)^{-1} 
\bigl[P\bigl((a \mathcal{E}/(2\pi)\bigr)\tau_p + \tau_s\bigr]
\ . 
\label{eq:Lambda-calE-def}
\end{align}
In \eqref{eq:plateau-bounded} 
we have changed the integration variable by 
$\omega = au/\bigl(2\pi \Lambda(\mathcal{E}) \bigr)$
and in \eqref{eq:plateau-unbounded} by $\omega = a\Omega/(2\pi)$.
Note that 
$\Lambda: \mathbb{R}^+ \rightarrow \mathbb{R}^+$ 
is differentiable, strictly increasing and polynomially bounded,
and $\Lambda (\mathcal{E})\to\infty$ as $\mathcal{E} \to \infty$. 

By \eqref{eq:plateau-prelim-both}
and~\eqref{eq:plateau-both}, 
it suffices to 
show that $H(\mathcal{E})$ is bounded as $\mathcal{E}\to\infty$ and 
$I(\mathcal{E})$ is unbounded as $\mathcal{E}\to\infty$. 

Consider $H(\mathcal{E})$ for $\mathcal{E}>0$. 
Symmetrising the integrand in~\eqref{eq:plateau-bounded}, we find 
\begin{align}
H(\mathcal{E})
& = \frac{1}{2 \mathcal{E}}
\int_{-\infty}^{\infty} 
\! du \, 
\frac{1- \cos u}{u^2} \, 
\widehat{g} \! \left(\frac{au}{2\pi\Lambda(\mathcal{E})}\right) 
\, 
f \bigl(- \mathcal{E}, u/\bigl(\Lambda(\mathcal{E}) \bigr) \bigr) 
\notag
\\
& \hspace{3ex}
+ \frac{1}{1 - \ee^{- \mathcal{E}}}
\int_{-\infty}^{\infty} 
\! du \, 
\frac{1- \cos u}{u^2} \, 
\widehat{g} \! \left(\frac{au}{2\pi\Lambda(\mathcal{E})}\right) 
\ , 
\label{eq:plateau-bounded-1}
\end{align}
where the function $f$ is given by 
\eqref{eq:fs-def} in Appendix~\ref{app:auxiliary-main1}. 
Since the function $(1 - \cos u)/u^2$ is integrable, each term in 
\eqref{eq:plateau-bounded-1} is bounded as $\mathcal{E}\to\infty$: in the 
second term this follows observing that 
$\widehat{g}(\omega)$ is bounded, and in the first term this 
follows using Lemma \ref{Lem1} and observing that
$\omega^2 \widehat{g}(\omega)$ is bounded. 
Hence $H(\mathcal{E})$ is bounded as $\mathcal{E}\to\infty$. 

Consider then $I(\mathcal{E})$ for $\mathcal{E}>0$. 
The strategy is to show that if $I(\mathcal{E})$ 
were bounded as $\mathcal{E}\to\infty$, 
this would imply for $\widehat g$ a stronger falloff than the compact 
support of $g$ allows~\cite{ingham}. 
A~technical subtlety is that 
the factor $\bigl[ 1- \cos\bigl(\Lambda(\mathcal{E}) \Omega\bigr) \bigr]/\Omega^2$ 
in \eqref{eq:plateau-unbounded} 
has zeroes, 
and the contributions to $I(\mathcal{E})$ from neighbourhoods 
of these zeroes will need a careful estimate from below. 

Suppose hence that $I_{\max}$ and $\mathcal{E}_{\text{min}}$ 
are positive constants such that 
$I(\mathcal{E}) \leq I_{\max}$ for all $\mathcal{E} \ge \mathcal{E}_{\text{min}}$. 
We shall show that this leads to a contradiction. 

To begin, let 
$J_\mathcal{E} \doteq \left[-\mathcal{E}-\ee^{-\mathcal{E}/2}, -  \mathcal{E}\right]$. 
Restricting the integral in \eqref{eq:plateau-unbounded} 
to the interval $J_\mathcal{E}$ gives the estimate 
\begin{align}
I(\mathcal{E}) 
& \geq 
\frac{2 \ee^\mathcal{E}}{\mathcal{E} \Lambda(\mathcal{E})} 
\int_{-\mathcal{E}-\ee^{-\mathcal{E}/2}}^{-\mathcal{E}} 
\! d\Omega \, 
\frac{\sin^2 \bigl(\Lambda(\mathcal{E}) \Omega/2\bigr)}{\Omega^2}  
\left( \frac{\Omega + \mathcal{E}}{\ee^{\Omega+\mathcal{E}} - 1}\right) 
\widehat{g} \! \left( \frac{a \Omega}{2 \pi}\right) 
\nonumber 
\\
& \geq 
\frac{2 \ee^{\mathcal{E}/2} Q(\mathcal{E})}{\mathcal{E} \Lambda(\mathcal{E})
\bigl(\mathcal{E} + \ee^{-\mathcal{E}/2}\bigr)^2} 
\inf_{\Omega \in J_\mathcal{E}} \widehat{g} \! 
\left( \frac{a \Omega}{2 \pi} \right)
\ , 
\label{ILowerBd}
\end{align}
where 
\begin{equation}
Q(\mathcal{E}) \doteq 
\inf_{\Omega \in J_\mathcal{E}} \sin^2 \! 
\left( \frac{\Lambda(\mathcal{E}) \Omega}{2} \right)
\ , 
\label{Qdef}
\end{equation}
and we have used the fact that $y/(\ee^y-1) \geq 1$ for $y \leq 0$. 

Next, let 
$T = \bigl\{ \mathcal{E} \in [\mathcal{E}_{\text{min}}, \infty) : 
Q(\mathcal{E}) \geq \ee^{-\mathcal{E}/4} \bigr\}$. 
$T$ is clearly non-empty and unbounded from above. 
For $\mathcal{E} \in T$, 
\eqref{ILowerBd} implies
\begin{equation}
\inf_{\Omega \in J_\mathcal{E}} 
\widehat{g} \! \left( \frac{a \Omega}{2\pi} \right) 
\leq R(\mathcal{E}) \, \ee^{-\mathcal{E}/4}
\ ,
\label{CrudeIneq}
\end{equation}
where
$R(\mathcal{E}) \doteq \tfrac12 
I_{\max} \Lambda(\mathcal{E}) \mathcal{E}(\mathcal{E}+1)^2$. 
For $\mathcal{E} \in T$ 
we hence have
\begin{equation}
\widehat{g}\left(\frac{-a \mathcal{E}}{2 \pi}\right) 
\leq R(\mathcal{E}) \ee^{-\mathcal{E}/4} 
+ C \ee^{-\mathcal{E}/2} \leq \bigl( R(\mathcal{E}) + C \bigr) 
\, \ee^{-\mathcal{E}/4}
\ ,
\end{equation}
where $C \doteq a (2\pi)^{-1} 
\sup_{\mathbb{R}^+} \left| {\widehat{g}\,}' \right|$ is a positive constant. 
Recalling that $\widehat{g}$ is even, we deduce that for any 
$\gamma \in (0, 1/4)$, there exists a positive constant $K$ 
such that 
$\widehat{g} \bigl( a \mathcal{E}/(2\pi) \bigr) 
\leq K \ee^{-\gamma |\mathcal{E}|}$ 
for $|\mathcal{E}| \in T$.

If $T$ comprised all of $[\mathcal{E}_{\text{min}} , \infty \bigr)$, 
the exponential falloff of $\widehat g$ in $T$ would 
now provide a contradiction with the compact support 
of~$g$~\cite{Hormander:1983,Reed-Simon-vol2}.
We shall show that the complement of $T$ in 
$\bigl[\mathcal{E}_{\text{min}} , \infty \bigr)$, 
$T^c = 
\bigl\{ \mathcal{E} \in [\mathcal{E}_{\text{min}}, \infty) : 
Q(\mathcal{E}) < \ee^{-\mathcal{E}/4} \bigr\}$, 
is sufficiently sparse for a contradiction with the 
compact support of $g$ still to ensue by Lemma~\ref{lem:g-S}. 

As a first step, we note that $Q(\mathcal{E})$ has the lower bound 
\begin{equation}
Q(\mathcal{E})  \geq \pi^{-2} \inf_{\Omega \in J_\mathcal{E}} 
\min_{k \in \mathbb{N}_0} |\Lambda(\mathcal{E}) \Omega + 2 \pi k |^2
\ , 
\end{equation}
using \eqref{Qdef}, 
recalling that $\Omega<0$ for $\Omega \in J_\mathcal{E}$, 
and using 
the inequality 
$|\sin(\theta)| \geq 2 |\theta|/\pi$, valid for $|\theta| \leq \pi/2$. 
Since 
$|\Lambda(\mathcal{E})(\Omega+\mathcal{E})| 
\leq \Lambda(\mathcal{E}) \, \ee^{-\mathcal{E}/2}$ 
for 
$\Omega \in J_\mathcal{E}$, it follows using the 
reverse triangle inequality that 
\begin{equation}
\pi Q(\mathcal{E})^{1/2}  \geq \min_{k \in \mathbb{N}_0} 
|\Lambda(\mathcal{E}) \mathcal{E} - 2 \pi k | 
- \Lambda(\mathcal{E}) \, \ee^{-\mathcal{E}/2}
\ .
\label{PreSetTc}
\end{equation}
From \eqref{PreSetTc} and the definition of $T$ 
it then follows that if $\mathcal{E} \in T^c$, 
there exists a $k \in \mathbb{N}_0$ such that 
$|\Lambda(\mathcal{E}) \mathcal{E} - 2 \pi k| < \Lambda(\mathcal{E}) \, 
\ee^{-\mathcal{E}/2} + \pi \ee^{-\mathcal{E}/8}$, and hence further that
for this $k$  
\begin{align}
|\Lambda(\mathcal{E}) \mathcal{E} - 2 \pi k| 
\leq C' \ee^{-\mathcal{E}/8}
\ , 
\label{PreSet-aux-Tc}
\end{align} 
where $C' \doteq \pi 
+ \sup_{\mathcal{E} > 0} \Lambda(\mathcal{E}) \, \ee^{-3\mathcal{E}/8}$
is a positive constant. 

We extend the domain of $\Lambda$ to include the origin by setting 
$\Lambda(0) \doteq \lim_{\mathcal{E}\to0+} \Lambda(\mathcal{E})$. 
Note that $\Lambda(0) >0$ by~\eqref{eq:Lambda-calE-def}. 
Let 
$\Xi: [0,\infty) \to [0,\infty)$ be defined by 
$\Xi(\mathcal{E}) \doteq \mathcal{E}\Lambda(\mathcal{E})$. 
For each $k \in \mathbb{N}_0$, 
let $\mathcal{E}_k$ be the unique solution to 
$\Xi(\mathcal{E}_k) = 2 \pi k$. 
($\mathcal{E}_k$~exists and is unique because 
$\Xi$ is strictly increasing, $\Xi(0) = 0$, 
and 
$\Xi(\mathcal{E}) \to \infty$ as $\mathcal{E} \to \infty$.) 
If $\mathcal{E} \in T^c$, 
we hence see from \eqref{PreSet-aux-Tc} that there 
exists a $k \in \mathbb{N}_0$ such that
$
\left|\Xi(\mathcal{E}) - \Xi(\mathcal{E}_k) \right| \leq C' \ee^{-\mathcal{E}/8}
$. 
Since $\Xi'(v) \geq \Lambda(0)$ for $v\ge0$, this implies 
\begin{equation}
|\mathcal{E} - \mathcal{E}_k| \leq \frac{C'}{\Lambda(0)} \ee^{-\mathcal{E}/8}
\ , 
\label{eq:calE-ineq-4}
\end{equation}
from which we see that 
$\mathcal{E} \geq \mathcal{E}_k - C'/\Lambda(0)$, and using this in 
the exponential in \eqref{eq:calE-ineq-4} gives 
\begin{equation}
|\mathcal{E}- \mathcal{E}_k| 
\leq \frac{C'\ee^{C'/(8 \Lambda(0))}}{\Lambda(0)} \ee^{-\mathcal{E}_k/8}
\ . 
\end{equation}

Collecting, we see that
\begin{equation}\label{eq:Tc}
T^c \subset \bigcup_{k \in \mathbb{N}_0} 
\left\{ \mathcal{E} \in [0, \infty) : 
\, |\mathcal{E} - \mathcal{E}_k | 
\leq \frac{C'\ee^{C'/(8 \Lambda(0))}}{\Lambda(0)} \ee^{-\mathcal{E}_k/8} 
\right\}
\ . 
\end{equation}
This provides for $T^c$ the sparseness that we need.

Let $S$ consist of the union on the right-hand side of \eqref{eq:Tc},
together with its reflection about the origin, and the interval $(-\mathcal{E}_{\text{min}},
\mathcal{E}_{\text{min}})$. Lemma \ref{LemMain2}  
of Appendix~\ref{app:aux-Main2} can now be applied to show that $S$
is a modest set in the sense of Lemma~\ref{lem:g-S}, the parameters
in Lemma~\ref{LemMain2} being chosen so that $\alpha = 1/8$, 
$\beta \in (0, 1/8)$, 
and $C = \delta_0  = 2 C' \ee^{C'/(8 \Lambda(0))}/\Lambda(0)$. 
Since $\Lambda(\mathcal{E}) \leq D(\mathcal{E}+1)^N$ for some 
$D > 0$ and $N > 0$, 
we have $\mathcal{E}_k + 1 \geq (2 \pi k/D)^{1/(N+1)}$, 
and the sequence $\left(\mathcal{E}_k\right)_{k \in \mathbb{N}}$ satisfies the 
condition \eqref{LemSumFinite} in 
Lemma \ref{LemMain2}.
The function $g$ then satisfies the conditions of 
Lemma~\ref{lem:g-S}, by which $g$ must be identically vanishing. 
This contradicts the construction of~$g$. 

Hence the pair of constants $(I_{\max},\mathcal{E}_{\text{min}})$ 
does not exist: 
$I(\mathcal{E})$ is not bounded as $\mathcal{E} \to \infty$. 
\end{proof}

\section{Summary and discussion\label{sec:conclusions}} 

We have asked how long one needs to 
wait for the thermality of the Unruh effect to 
become manifest 
in the response of an 
Unruh-DeWitt particle detector that is coupled linearly 
to a scalar field, 
assuming that the interaction is sufficiently 
weak for linear perturbation theory to be applicable. 

We considered two implementations of the 
long interaction limit: 
an adiabatic scaling, 
which stretches the whole profile of the interaction, 
including the initial switch-on interval and the final switch-off interval, 
and a plateau scaling, which leaves the durations of 
the switch-on and switch-off 
intervals unchanged but 
stretches an intermediate interval during 
which the interaction has constant strength. 

We first showed that the long interaction limit 
with either scaling leads to the 
well-known thermality results, 
in the sense of the detailed balance condition, 
when the detector's energy gap 
$E_{\text{gap}}$ is fixed. However, we also showed that when 
the interaction duration is fixed, detailed balance 
cannot hold in the limit of large~$E_{\text{gap}}$. 
This raised the question of how long one needs to wait for detailed balance 
to hold at a given $E_{\text{gap}}$ when $E_{\text{gap}}$ is large. Our main results 
addressed this question for a massless scalar field in four spacetime 
dimensions, 
in which case the detector's response is identical 
to that of a static detector in a static heat bath. 
We showed that detailed balance at large $E_{\text{gap}}$ can be achieved 
in interaction time that grows as a power-law of $E_{\text{gap}}$ 
with the adiabatic scaling but, 
under mild technical conditions, not with the plateau scaling. 
The upshot is that to achieve detailed balance in power-law interaction time, 
one needs to stretch not just the overall duration 
of the interaction but also the intervals 
in which the interaction is switched on and off. 

Our analysis was motivated in part by experimental considerations: 
a waiting time that grows no faster than a power-law in $E_{\text{gap}}$
would presumably be a physically sensible requirement in 
prospective experimental tests of the Unruh effect. 
However, a deeper motivation was to develop mathematical insight into 
what one might mean by thermality in the detector's response when 
the Wightman function is not invariant under time translations 
along the detector's worldline. 
In such situations the time dependence in the detector's response 
comes not just 
from the switching function that is specified by hand but from 
the genuine time dependence in the quantum field's state or in the detector's motion. 
Examples are a detector in the spacetime of a collapsing star 
during the onset of Hawking radiation~\cite{Juarez-Aubry:2014jba}, 
a detector that falls into a black hole~\cite{Juarez-Aubry:2014jba,Juarez-Aubry:2015dla}, 
and a detector in an expanding cosmology~\cite{Garay:2013dya}. 
Our results show that to characterise the response as approximately thermal 
over some limited interval of time, 
perhaps in a time-dependent local temperature, 
considering the large energy gap limit will not help. 
In particular, a time-dependent temperature defined in an adiabatic regime 
\cite{Kothawala:2009aj,Barcelo:2010pj,Barbado:2011dx,Barbado:2012pt,Barbado:2012fy,Smerlak:2013sga} 
cannot remain valid to arbitrarily high energies when the finite duration of 
the interaction is accounted for.

\section*{Acknowledgments}

We thank Adrian Ottewill and Silke Weinfurtner for helpful comments. 
BAJ-A is supported by Consejo Nacional de Ciencia y Tecnolog\'ia
(CONACYT), Mexico, REF 216072/311506. JL is supported in part by STFC
(Theory Consolidated Grant ST/J000388/1). 


\appendix

\section{Polynomial boundedness of Unruh ${\mathcal{F}}(E)$\label{app:polboundFUnruh}} 

In this appendix we verify that ${\mathcal{F}}(E)$ \eqref{ResponseFn} is 
polynomially bounded for the Unruh effect in the setting 
of Section~\ref{sec:unruh-section}. This follows 
by applying to the 
Unruh effect Wightman function \eqref{eq:rindler-wightman}
the following proposition. 

\begin{prop}
In four spacetime dimensions, 
suppose 
\begin{align}
\mathcal{W}(s) = \lim_{\epsilon\to0+} 
\left( - \frac{1}{4 \pi^2 {(s- \ii \epsilon)}^2} \right) 
+ f(s)
\ , 
\label{eq:Wapp:singminus}
\end{align} 
where the limit indicates the
boundary value in the sense of Definition \ref{def:KMS} 
and $f$ is a smooth function. 
Then $\mathcal{F}(E)$ \eqref{ResponseFn} satisfies 
\begin{align}
\mathcal{F}(E) = 
-\frac{E \Theta(-E)}{2 \pi} \int_{-\infty}^\infty \! d\tau \, [\chi(\tau)]^2 
+ O^\infty(1/E)
\label{eq:F-leadingterm}
\end{align}
as $|E| \to \infty$, 
where $\Theta$ is the Heaviside step function. 
\label{prop:F-leadingterm}
\end{prop}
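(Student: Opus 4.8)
\medskip
\noindent\textbf{Proof proposal.}
The plan is to isolate the universal short-distance singularity of $\mathcal{W}$, reduce the double integral in \eqref{ResponseFn} to a single Fourier transform, and analyse the two pieces separately. Writing $\mathcal{W}_{\mathrm{sing}}(s) \doteq -\frac{1}{4\pi^2}(s-\ii0)^{-2}$, so that by hypothesis $\mathcal{W} = \mathcal{W}_{\mathrm{sing}} + f$ with $f$ smooth, I would first change variables in \eqref{ResponseFn} from $(\tau',\tau'')$ to $(s,\tau'')$ with $s = \tau'-\tau''$. This exhibits $\mathcal{F}(E)$ as the Fourier transform at~$E$ of the product $\mathcal{W}\cdot g$, that is $\mathcal{F}(E) = \int ds\,\ee^{-\ii E s}\,\mathcal{W}(s)\,g(s)$ (read distributionally as $h\mapsto\mathcal{W}(hg)$ followed by evaluation of the transform at~$E$), where $g(s) = \int d\tau''\,\chi(s+\tau'')\chi(\tau'')$ is the autocorrelation of~$\chi$. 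This $g$ lies in $C_0^\infty(\BbbR)$, is real and \emph{even} because $\chi$ is real, has $\widehat{g} = |\widehat\chi|^2 \ge 0$ with rapid decrease, and satisfies $g(0) = \int d\tau\,[\chi(\tau)]^2$ by Plancherel.

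The smooth part is immediate: $f\cdot g \in C_0^\infty(\BbbR)$, being the product of a smooth function with a smooth compactly supported one, so its Fourier transform is a Schwartz function and contributes only $O^\infty(1/E)$ to $\mathcal{F}(E)$ as $|E|\to\infty$. This is where the hypothesis that $f$ need only be smooth (and not decaying) suffices, since it is $g$ that supplies the compact support.

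For the singular part, the key input is $\widehat{\mathcal{W}_{\mathrm{sing}}}$. Writing $(s-\ii0)^{-2} = -\tfrac{d}{ds}(s-\ii0)^{-1}$ and combining the standard evaluation $\widehat{(s-\ii0)^{-1}}(\omega) = 2\pi\ii\,\Theta(-\omega)$ with $\widehat{h'}(\omega) = \ii\omega\,\widehat{h}(\omega)$ gives $\widehat{\mathcal{W}_{\mathrm{sing}}}(\omega) = -\frac{\omega}{2\pi}\Theta(-\omega)$, which is non-negative and coincides with the $a\to0$ limit of \eqref{eq:Unruh-hatWstat} -- a convenient consistency check. By the convolution theorem $\widehat{\mathcal{W}_{\mathrm{sing}}\,g}(E) = \frac{1}{2\pi}\bigl(\widehat{\mathcal{W}_{\mathrm{sing}}}\star\widehat{g}\bigr)(E) = \frac{1}{4\pi^2}\int_0^\infty du\,u\,\widehat{g}(E+u) = \frac{1}{4\pi^2}\int_E^\infty dw\,(w-E)\,\widehat{g}(w)$. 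For $E\to+\infty$ the integration range recedes to $+\infty$ and the rapid decay of $\widehat g$ makes this $O^\infty(1/E)$, matching $\Theta(-E)=0$. For $E\to-\infty$ I would write $\int_E^\infty = \int_{\BbbR} - \int_{-\infty}^E$; the full-line term equals $\int_{\BbbR}w\,\widehat g(w)\,dw - E\int_{\BbbR}\widehat g(w)\,dw = 0 - 2\pi E\,g(0)$, the first integral vanishing because it is proportional to $g'(0)$ and $g$ is even, while the tail $\int_{-\infty}^E(w-E)\widehat g(w)\,dw$ is again $O^\infty(1/E)$ by rapid decay of $\widehat g$. Recombining, $\mathcal{F}(E) = \widehat{\mathcal{W}_{\mathrm{sing}}\,g}(E) + \widehat{fg}(E) = -\frac{E\,\Theta(-E)}{2\pi}\int d\tau\,[\chi(\tau)]^2 + O^\infty(1/E)$, which is~\eqref{eq:F-leadingterm}.

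The main obstacle is not the asymptotics but the bookkeeping: one must justify the change of variables at the level of distributions (treating $\mathcal{W}\cdot g$ as a compactly supported distribution whose transform is evaluated pointwise, and checking that $\mathcal{W}_{\mathrm{sing}}\cdot g$ and $f\cdot g$ may be handled separately), and one must verify that no $E$-independent constant survives the $E\to-\infty$ expansion of the singular part. The latter is exactly why the evenness of~$g$ -- hence the assumption that $\chi$ is real-valued -- is essential: it forces $\int_{\BbbR}w\,\widehat g(w)\,dw = 0$, eliminating the one term that would otherwise spoil the $O^\infty$ error estimate.
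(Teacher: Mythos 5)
Your proposal is correct, and it takes a genuinely different route from the paper. The paper does not work with the distributional Fourier transform of $(s-\ii 0)^{-2}$ directly: it instead imports a position-space rewriting of $\mathcal{F}(E)$ from earlier work (equation (3.10) of \cite{Louko:2007mu}, developed as in \cite{Juarez-Aubry:2014jba}), in which the term $-\tfrac{E\Theta(-E)}{2\pi}\int d\tau\,[\chi(\tau)]^2$ has already been split off explicitly, together with a remainder consisting of (a) a cosine transform of the smooth function $s\mapsto s^{-2}\int d\tau\,\chi(\tau)[\chi(\tau)-\chi(\tau-s)]$, which is $O(s^{-2})$ at infinity and is shown to be $O^\infty(1/E)$ by repeated integration by parts, and (b) the transform of a smooth compactly supported function built from the regular part of $\mathcal{W}$. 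Your approach keeps everything in Fourier space: the identity $\widehat{(s-\ii 0)^{-2}}(\omega)=2\pi\omega\Theta(-\omega)$ produces the leading term and its $\Theta(-E)$ structure in one stroke via the convolution theorem, with all corrections controlled by the rapid decay of $\widehat{g}=|\widehat\chi|^2$ and the evenness of $\widehat g$ killing the would-be constant term (your $\int w\,\widehat g(w)\,dw=0$ observation is the exact analogue of the evenness manipulations the paper performs in passing from its first to its second displayed formula). Your treatment of the two halves is consistent with the paper's: $f\cdot g\in C_0^\infty(\BbbR)$ handles the smooth part identically in both proofs, and your handling correctly avoids assuming anything about $\widehat f$ itself, which matters since $f$ is only assumed smooth. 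What your route buys is self-containedness and transparency about why the linear growth occurs only for $E<0$; what the paper's route buys is contact with the transition-rate formulas standard in the detector literature. The distributional bookkeeping you flag (multiplying the tempered distribution $(s-\ii 0)^{-2}$ by $g\in C_0^\infty$ and applying $\widehat{uv}=\tfrac{1}{2\pi}\,\widehat u\star\widehat v$ with $\widehat u$ a polynomially bounded function) is standard and poses no obstruction.
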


\begin{proof}
Starting from equation (3.10) in \cite{Louko:2007mu} 
and proceeding as in Section 2 and Appendix A of~\cite{Juarez-Aubry:2014jba}, 
we can write $\mathcal{F}(E)$ \eqref{ResponseFn} as 
\begin{align}
\mathcal{F}(E) & = 
-\frac{E \Theta(-E)}{2 \pi} \int_{-\infty}^\infty \! d\tau \, [\chi(\tau)]^2 
+ \frac{1}{2 \pi^2} \int_0^\infty \! ds \,\frac{\cos(Es)}{s^2} 
\int_{-\infty}^\infty \! d\tau \, \chi(\tau) [\chi(\tau) - \chi(\tau-s)] 
\nonumber 
\\
& \hspace{3ex}
+ 2 \int_{-\infty}^\infty \! d \tau \, \chi(\tau) \int_0^\infty \! ds \, \chi(\tau-s)
\Realpart 
\left[ \ee^{-\ii E s} \left( \mathcal{W}(\tau, \tau-s) + \frac{1}{4 \pi^2 s^2} \right) \right]
\ . 
\label{Louko-Satz2}
\end{align}
In the last term of~\eqref{Louko-Satz2}, we first use the 
stationarity condition 
\eqref{eq:Wstat-def} and \eqref{eq:Wapp:singminus} to replace 
the parentheses
by~$f(s)$, and we then interchange the integrals over 
$s$ and~$\tau$. We note that $\mathcal{W}(-s) = \overline{\mathcal{W}(s)}$ implies 
$f(-s) = \overline{f(s)}$, 
the function 
$s \mapsto \int_{-\infty}^\infty \! d \tau \, \chi(\tau) \chi(\tau-s)$ 
is an even smooth function of compact support, and 
the function 
$s \mapsto s^{-2} \int_{-\infty}^\infty \! d\tau \, \chi(\tau) [\chi(\tau) - \chi(\tau-s)]$ 
is an even smooth function with falloff $O\bigl(s^{-2}\bigr)$ at $s \to \pm\infty$. 
This allows us to write 
\begin{align}
\mathcal{F}(E) & = 
-\frac{E \Theta(-E)}{2 \pi} \int_{-\infty}^\infty \! d\tau \, [\chi(\tau)]^2 
+ \frac{1}{4 \pi^2} \int_{-\infty}^\infty \! ds \,\frac{\cos(Es)}{s^2} 
\int_{-\infty}^\infty \! d\tau \, \chi(\tau) [\chi(\tau) - \chi(\tau-s)] 
\nonumber 
\\
& \hspace{3ex}
+ \int_{-\infty}^\infty \! ds
\, \ee^{-\ii E s} \, f(s) 
\int_{-\infty}^\infty \! d \tau \, \chi(\tau) \chi(\tau-s)
\ . 
\label{Louko-Satz4}
\end{align}
The third term in \eqref{Louko-Satz4} 
is the Fourier 
transform of a smooth compactly supported function, 
and hence $O^\infty(1/E)$ as $|E|\to\infty$~\cite{Hormander:1983,Reed-Simon-vol2}. 
The second term
is the real part of the Fourier 
transform of a smooth function that is a multiple of 
$s^{-2}$ outside a compact interval, 
and may be shown to be 
$O^\infty(1/E)$ as $|E|\to\infty$ by repeated integration by parts~\cite{wong}.
\end{proof}

\section{Auxiliary results for Theorem \ref{thm:Main1}\label{app:auxiliary-main1}} 

\begin{lem}
Let $f: \mathbb{R}^2 \rightarrow \mathbb{R}$ be defined by 
\begin{equation}
f(u,v) \doteq \frac{u + v}{\ee^{u+v} - 1} 
+ \frac{u - v}{\ee^{u - v} - 1} - \frac{2 u}{\ee^{u} - 1}
\ ,
\label{eq:fs-def}
\end{equation}
where the formula is understood in the limiting sense at $u= v$ and $u = -v$. 
Then 
$0 \leq f(u,v) 
\leq v^2/6$.
\label{Lem1}
\end{lem}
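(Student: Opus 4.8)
The plan is to recognise $f(u,v)$ as a second symmetric difference of the single‑variable function $\phi(x)\doteq x/(\ee^x-1)$, which has a removable singularity at the origin and hence extends to a smooth (in fact real‑analytic) function on all of $\BbbR$ with $\phi(0)=1$. The three terms of \eqref{eq:fs-def}, after the indicated limits, are precisely $\phi(u+v)$, $\phi(u-v)$ and $2\phi(u)$, so
\[
f(u,v)=\phi(u+v)-2\phi(u)+\phi(u-v).
\]
Since $f(u,v)$ and $v^2/6$ are both even in $v$, it suffices to treat $v\ge0$. Writing $\phi(u+v)-\phi(u)=\int_0^v\phi'(u+r)\,dr$ and $\phi(u)-\phi(u-v)=\int_{-v}^0\phi'(u+s)\,ds$ and subtracting gives the integral representation
\[
f(u,v)=\int_0^v\!\!\int_{-r}^{r}\phi''(u+s)\,ds\,dr=\int_{-v}^{v}(v-|s|)\,\phi''(u+s)\,ds .
\]
Because the weight $v-|s|$ is non‑negative on $[-v,v]$ with $\int_{-v}^v(v-|s|)\,ds=v^2$, the asserted bounds $0\le f(u,v)\le v^2/6$ reduce immediately to the pointwise two‑sided estimate $0\le\phi''(x)\le\tfrac16$ for all $x\in\BbbR$, which is the only substantive point.

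To get a workable form of $\phi''$ I would use $\phi(x)=\tfrac{x}{2}\coth(\tfrac{x}{2})-\tfrac{x}{2}$, so that $\phi''(x)=\tfrac{d^2}{dx^2}\bigl[\tfrac{x}{2}\coth(\tfrac{x}{2})\bigr]$; a short differentiation (with $t=x/2$) yields
\[
\phi''(x)=\frac{t\coth t-1}{2\sinh^2 t},\qquad t=\frac{x}{2},
\]
understood as $1/6$ at $t=0$, and $\phi''$ is even, so it is enough to take $t\ge0$. The lower bound $\phi''\ge0$ is then immediate, since $t\coth t\ge1$ (equivalently $\tanh t\le t$ for $t\ge0$) and $\sinh^2 t\ge0$; this is just the familiar convexity of $x/(\ee^x-1)$.

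For the upper bound, clearing the (for $t>0$ positive) denominators turns $\phi''(x)\le\tfrac16$ into the elementary inequality $F(t)\ge0$ for $t\ge0$, where $F(t)\doteq\tfrac13\sinh^3 t+\sinh t-t\cosh t$. This I would dispose of by the standard value‑plus‑derivative‑sign argument: $F(0)=0$ and
\[
F'(t)=\sinh t\,(\sinh t\cosh t-t)=\sinh t\,(\tfrac12\sinh 2t-t)\ge0\qquad(t\ge0),
\]
using $\sinh\theta\ge\theta$ for $\theta\ge0$, so $F\ge0$ on $[0,\infty)$ (with equality only at $t=0$). Tracing back gives $\phi''(x)\le\tfrac16$ for every $x$, and combining with the integral representation completes the proof.

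The genuinely delicate step is the upper bound $\phi''\le\tfrac16$: one must put $\phi''$ into the right closed form, recognise the resulting inequality as $F(t)\ge0$, and then see that this yields to a one‑line monotonicity argument. Everything else — smoothness of $\phi$, the symmetric‑difference identity, non‑negativity of the integral weight, and the lower bound $\phi''\ge0$ — is routine. (An alternative, slightly longer, route to both bounds is via the partial‑fraction expansion $\phi(x)=1-\tfrac{x}{2}+\sum_{k\ge1}\tfrac{2x^2}{x^2+4\pi^2k^2}$, but the $\coth$ computation above is the most economical.)
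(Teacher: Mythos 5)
Your proof is correct, and it takes a genuinely different route from the paper's. You write $f(u,v)=\phi(u+v)-2\phi(u)+\phi(u-v)$ with $\phi(x)=x/(\ee^x-1)$ and convert the second symmetric difference into $\int_{-v}^{v}(v-|s|)\,\phi''(u+s)\,ds$, so that both bounds follow from the single pointwise estimate $0\le\phi''\le\phi''(0)=\tfrac16$, which you reduce to $\tfrac13\sinh^3 t+\sinh t-t\cosh t\ge0$ and settle by a one-line monotonicity argument. The paper instead fixes $v$, uses evenness to restrict to $u\ge0$, shows $\partial_u f(u,v)<0$ for $u>0$ together with $f(u,v)\to0$ as $u\to\infty$ to conclude $0\le f(u,v)\le f(0,v)=2\bigl[(v/2)\coth(v/2)-1\bigr]$, and then verifies $f(0,v)\le v^2/6$ via the sign of $(x^2+3)\sinh x-3x\cosh x$. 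Both arguments bottom out in an elementary hyperbolic inequality of comparable difficulty; yours has the advantage of explaining the constant $1/6$ conceptually (it is $\phi''(0)$, i.e.\ the second Bernoulli number) and of avoiding the need to analyse the sign of $\partial_u f$, while the paper's avoids second derivatives and the Fubini step. The only points worth making fully explicit in a written version are the justification of the interchange of integration order (trivial here, since $\phi''$ is continuous and the domain is a bounded triangle) and the remark that $\phi$ is smooth on all of $\BbbR$ because the only real singularity of $x/(\ee^x-1)$ is the removable one at the origin.
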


\begin{proof}
A direct computation shows that $f$ is even in each of its arguments. 
Since $f(u,0)=0$ for all~$u$, 
it suffices to consider the case 
$u\ge0$ and $v>0$. 
We then have 
$f(0,v) = 2 \bigl[ (v/2)\coth(v/2) -1 \bigr] >0$ and 
$\lim_{u\to\infty} f(u,v) =0$, and it may be shown by an elementary analysis that 
$\partial_u f(u,v)<0$ for $u>0$.  
This implies $0 \leq f(u,v) \leq f(0,v)$. Finally, 
$f(0,v) \leq v^2/6$ may be verified by an elementary analysis of the 
sign of $(x^2+3) \sinh x - 3x \cosh x$. 
\end{proof}

\begin{lem}
The Unruh effect response function 
$\widetilde{\mathcal{F}}_\lambda(E)$ \eqref{eq:tildeFadiab}
and its $\lambda\to\infty$ limit 
$\widetilde{\mathcal{F}}_\infty(E)$ \eqref{eq:tildeFadiab-infty}
satisfy 
$\widetilde{\mathcal{F}}_\infty(E) 
\leq \widetilde{\mathcal{F}}_\lambda(E) 
\leq 
\widetilde{\mathcal{F}}_\infty(E) 
+ (24 \pi a)^{-1}\lambda^{-2} 
|| \omega \widehat{\chi} ||^2$.
\label{Lem2}
\end{lem}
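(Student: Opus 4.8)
The plan is to reduce the statement to Lemma~\ref{Lem1} after recasting both functions in a common form. Writing $h(\omega) \doteq \omega/\bigl(\ee^{2\pi\omega/a}-1\bigr) = 2\pi\What(\omega)$, formulas \eqref{eq:tildeFadiab} and \eqref{eq:tildeFadiab-infty} read
\begin{align*}
\widetilde{\mathcal{F}}_\lambda(E) = \frac{1}{(2\pi)^2}\int_{-\infty}^\infty \! d\omega\, \bigl|\widehat\chi(\omega)\bigr|^2\, h(E+\omega/\lambda)\,, \qquad \widetilde{\mathcal{F}}_\infty(E) = \frac{\|\widehat\chi\|^2}{(2\pi)^2}\, h(E)\,.
\end{align*}
Because $\chi$ is real-valued, $\bigl|\widehat\chi\bigr|^2$ is even, so I would symmetrise the integrand to obtain
\begin{align*}
\widetilde{\mathcal{F}}_\lambda(E) - \widetilde{\mathcal{F}}_\infty(E) = \frac{1}{2(2\pi)^2}\int_{-\infty}^\infty \! d\omega\, \bigl|\widehat\chi(\omega)\bigr|^2 \bigl[\, h(E+\omega/\lambda) + h(E-\omega/\lambda) - 2h(E) \,\bigr]\,,
\end{align*}
where the bracket is integrable against $\bigl|\widehat\chi\bigr|^2$ since $\widehat\chi$ has rapid decay (this also ensures $\|\omega\widehat\chi\|<\infty$, so the asserted bound is meaningful).

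Next I would set $u = 2\pi E/a$ and $v = 2\pi\omega/(a\lambda)$, so that $h(E\pm\omega/\lambda) = (a/2\pi)(u\pm v)/(\ee^{u\pm v}-1)$ and $h(E) = (a/2\pi)u/(\ee^u-1)$; the bracket in the integrand then equals $(a/2\pi) f(u,v)$ with $f$ the function of Lemma~\ref{Lem1}. Lemma~\ref{Lem1} gives $0 \le f(u,v) \le v^2/6$, hence
\begin{align*}
0 \le h(E+\omega/\lambda) + h(E-\omega/\lambda) - 2h(E) \le \frac{a}{2\pi}\cdot\frac{1}{6}\Bigl(\frac{2\pi\omega}{a\lambda}\Bigr)^2 = \frac{\pi\,\omega^2}{3a\lambda^2}\,.
\end{align*}

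The lower bound immediately yields $\widetilde{\mathcal{F}}_\lambda(E) \ge \widetilde{\mathcal{F}}_\infty(E)$, and inserting the upper bound into the integral representation above gives
\begin{align*}
\widetilde{\mathcal{F}}_\lambda(E) - \widetilde{\mathcal{F}}_\infty(E) \le \frac{1}{2(2\pi)^2}\cdot\frac{\pi}{3a\lambda^2}\int_{-\infty}^\infty \! d\omega\, \omega^2\bigl|\widehat\chi(\omega)\bigr|^2 = \frac{1}{24\pi a\lambda^2}\,\|\omega\widehat\chi\|^2\,,
\end{align*}
which is exactly the claimed estimate. There is no real obstacle here beyond keeping track of the normalisation constants as one passes between $\What$, $h$ and $f$; the substance of the argument is entirely contained in Lemma~\ref{Lem1} together with the evenness of $\bigl|\widehat\chi\bigr|^2$.
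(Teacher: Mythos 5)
Your proof is correct and follows essentially the same route as the paper's: symmetrise the integrand using the evenness of $|\widehat\chi|^2$, recognise the bracket as the function $f$ of Lemma~\ref{Lem1} after the substitution $u=2\pi E/a$, $v=2\pi\omega/(\lambda a)$, and apply the bound $0\le f(u,v)\le v^2/6$. The constants also work out exactly as in the paper.
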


\begin{proof}
Since $\left|\widehat{\chi}(\omega) \right|$ is even in~$\omega$, 
symmetrising the integrands in 
\eqref{eq:tildeFadiab}
and 
\eqref{eq:tildeFadiab-infty}
gives 
\begin{align}
\widetilde{\mathcal{F}}_\lambda(E) 
- 
\widetilde{\mathcal{F}}_\infty(E) 
= \frac{a}{2 {(2\pi)}^3}
\int_{-\infty}^\infty d\omega \left|\widehat{\chi}(\omega) \right|^2 
f \bigl(
2\pi E/a, 2\pi\omega/(\lambda a)
\bigr) 
\ , 
\label{Lem2-proof-int1}
\end{align}
where $f$ is given by~\eqref{eq:fs-def}. 
By Lemma~\ref{Lem1}, 
$ 0 \le 
f \bigl(
2\pi E/a, 2\pi\omega/(\lambda a) \bigr) 
\le \frac{1}{6}
\left(\frac{2\pi}{\lambda a}\right)^2 \omega^2$, 
and inserting this in 
\eqref{Lem2-proof-int1} completes the proof. 
\end{proof}

\begin{lem}
Let $h: \mathbb{R}^{+}\times\mathbb{R}^{+} \to \BbbR$ be defined by 
$h(u,v) = (u - v)/(\cosh u-\cosh v)$, 
where the formula is understood in the limiting sense at $u=v$. 
$h$~is strictly positive, and it is strictly decreasing in each of its arguments.
\label{LemBound1}
\end{lem}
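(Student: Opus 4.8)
The plan is to prove strict positivity and strict monotonicity of $h(u,v) = (u-v)/(\cosh u - \cosh v)$ on $\mathbb{R}^+\times\mathbb{R}^+$. For positivity, note that $\cosh$ is strictly increasing on $[0,\infty)$, so for $u\neq v$ the numerator and denominator have the same sign and the quotient is positive; at $u=v$ the limit is $1/\sinh u > 0$ (by L'H\^opital or the power series). So $h>0$ everywhere on $\mathbb{R}^+\times\mathbb{R}^+$.

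For monotonicity, by the evident symmetry $h(u,v)=h(v,u)$ it suffices to show $h$ is strictly decreasing in $u$ for each fixed $v>0$. The cleanest route is to use the integral representation
\begin{align}
\frac{1}{h(u,v)} = \frac{\cosh u - \cosh v}{u-v} = \frac{1}{u-v}\int_v^u \sinh t\,dt = \int_0^1 \sinh\bigl(v + \sigma(u-v)\bigr)\,d\sigma
\ ,
\end{align}
valid for all $u,v>0$ including $u=v$. Thus $1/h(u,v)$ is an average of $\sinh$ over the segment between $v$ and $u$. I would then differentiate under the integral sign:
\begin{align}
\partial_u\!\left(\frac{1}{h(u,v)}\right) = \int_0^1 \sigma \cosh\bigl(v+\sigma(u-v)\bigr)\,d\sigma > 0
\ ,
\end{align}
since the integrand is strictly positive. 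Hence $1/h(u,v)$ is strictly increasing in $u$, so $h(u,v)$ is strictly decreasing in $u$; by symmetry it is strictly decreasing in $v$ as well. This also re-proves positivity, since $1/h(u,v) \geq \sinh(\min(u,v)) > 0$, or more simply it is an average of positive quantities.

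The main thing to be careful about is the behaviour at the diagonal $u=v$, where $h$ is defined only in the limiting sense; the integral representation handles this uniformly, because $\int_0^1 \sinh(v+\sigma(u-v))\,d\sigma$ is a smooth function of $(u,v)$ on all of $\mathbb{R}^+\times\mathbb{R}^+$ and reduces to $\sinh v$ on the diagonal, so no separate case analysis is needed and the differentiation under the integral is justified by smoothness of the integrand. If one instead prefers a bare-hands approach away from the diagonal, one can set $F(u) = (u-v)\sinh u - (\cosh u - \cosh v)$ and show $\partial_u h < 0$ is equivalent to $F(u)>0$ for $u\neq v$, then check $F(v)=0$, $F'(u)=(u-v)\cosh u$ changes sign only at $u=v$, giving $F(u)>0$ for $u\neq v$; but the integral representation is shorter and avoids the diagonal entirely, so I would present that one.
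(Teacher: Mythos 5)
Your proof is correct. The paper's own proof is essentially a two-line assertion: positivity is ``immediate'' (same sign of numerator and denominator, as you say), $h(u,v)=h(v,u)$, and ``an elementary analysis shows that $\partial_v h(u,v)<0$'' --- i.e.\ precisely the bare-hands computation you sketch at the end, where one reduces $\partial_u h<0$ to $F(u)=(u-v)\sinh u-(\cosh u-\cosh v)>0$ and checks $F(v)=0$, $F'(u)=(u-v)\cosh u$. Your main presented route, via the divided-difference representation
\begin{equation*}
\frac{1}{h(u,v)}=\int_0^1 \sinh\bigl(v+\sigma(u-v)\bigr)\,d\sigma\ ,
\end{equation*}
is a genuinely different and arguably cleaner packaging: it makes $1/h$ manifestly smooth and positive on all of $\mathbb{R}^+\times\mathbb{R}^+$ (so the diagonal needs no separate treatment), gives strict monotonicity by differentiating under the integral, and recovers positivity as a by-product. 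The paper's version is shorter on the page only because it omits the ``elementary analysis''; yours is complete as written. The only point worth being slightly more explicit about is the strictness of $\partial_u(1/h)>0$: the integrand $\sigma\cosh(v+\sigma(u-v))$ vanishes at $\sigma=0$, but it is continuous and strictly positive on $(0,1]$, so the integral is indeed strictly positive --- a one-clause remark suffices.
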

\begin{proof}
It is immediate that $h$ is strictly positive, and $h(u,v) = h(v,u)$. 
An elementary analysis shows that 
$\partial_v h(u,v) < 0$. 
\end{proof}

\begin{lem}
Under the assumptions of Theorem~\ref{thm:Main1}, 
with $\lambda(E) = (2\pi E/a)^{1+p}$ and 
$\mathcal{G}_\text{\rm{est}}(E)$ given by~\eqref{eq:G-est-def}, 
the inequality \eqref{eq:G-est-used}
holds for sufficiently large~$E$. 
\label{Lem-n1}
\end{lem}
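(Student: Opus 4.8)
\noindent The plan is to start from the difference identity already isolated in the proof of Lemma~\ref{Lem2}. Writing $u \doteq 2\pi E/a$ and $v_\lambda(\omega) \doteq 2\pi\omega/(a\lambda)$, equation \eqref{Lem2-proof-int1} reads
\[
\widetilde{\mathcal{F}}_\lambda(E) - \widetilde{\mathcal{F}}_\infty(E)
= \frac{a}{2(2\pi)^3}\int_{-\infty}^\infty d\omega\,
\bigl|\widehat{\chi}(\omega)\bigr|^2\, f\bigl(u,v_\lambda(\omega)\bigr)\,,
\]
with $f$ as in \eqref{eq:fs-def} and $0\le f(u,v)\le v^2/6$ by Lemma~\ref{Lem1}. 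With $\lambda(E)=u^{1+p}$, the target correction on the right-hand side of \eqref{eq:G-est-used} is a multiple of $u^{1-(1+p-q^{-1})/2}\ee^{-u}$, so it decays exponentially in $u$, whereas the crude bound $f\le v^2/6$ only reproduces the polynomially-decaying estimate of Lemma~\ref{Lem2} and is insufficient. The strategy is therefore to split the $\omega$-integral at $|\omega|=a\epsilon\lambda/(2\pi)$ for a small \emph{fixed} $\epsilon>0$, i.e.\ into a ``core'' on which $|v_\lambda(\omega)|\le\epsilon$ and a ``tail''.

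On the core I would sharpen the pointwise bound on $f$. Setting $\psi(y)=y/(\ee^y-1)$, so that $f(u,v)=\psi(u+v)+\psi(u-v)-2\psi(u)$, Taylor expansion in $v$ together with $f(u,0)=\partial_v f(u,0)=0$ (from evenness in $v$) gives $|f(u,v)|\le v^2\sup_{|s|\le\epsilon}|\psi''(u+s)|$. Expanding $\psi(y)=y\ee^{-y}+O(y\ee^{-2y})$ and differentiating the associated series yields $\psi''(y)=(y-2)\ee^{-y}+O(y\ee^{-2y})$, hence $\sup_{|s|\le\epsilon}|\psi''(u+s)|\le C u\ee^{-u}$ for $u$ large. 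Inserting $f(u,v_\lambda(\omega))\le Cu\ee^{-u}v_\lambda(\omega)^2$, enlarging the domain of integration back to all of $\BbbR$, and using $\int|\widehat{\chi}(\omega)|^2\omega^2\,d\omega=\|\omega\widehat{\chi}\|^2$ bounds the core contribution by a multiple of $u\ee^{-u}\lambda^{-2}\|\omega\widehat{\chi}\|^2=(\text{const})\,u^{-1-2p}\ee^{-u}$.

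On the tail I would keep the crude bound $f(u,v_\lambda(\omega))\le v_\lambda(\omega)^2/6$ but invoke the strong Fourier decay hypothesis \eqref{eq:Main1-inputest}: since $|\omega|\gtrsim\lambda$ throughout the tail, $\int_{|\omega|\ge a\epsilon\lambda/(2\pi)}|\widehat{\chi}(\omega)|^2\omega^2\,d\omega$ is the tail of a super-exponentially decaying integrand and is bounded by a multiple of $\exp(-A'\lambda^q)$ for some $A'>0$, so the tail contribution is $(\text{const})\,\lambda^{-2}\exp(-A'\lambda^q)=(\text{const})\,u^{-2(1+p)}\exp(-A'u^{(1+p)q})$. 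It then remains to check that both contributions are dominated by $u^{1-(1+p-q^{-1})/2}\ee^{-u}$ for $u$ large: the core term because $-1-2p<1-(1+p-q^{-1})/2$ (true since $p>0$), and the tail term because $(1+p)q>1$ — which is exactly the hypothesis $p>q^{-1}-1$ of Theorem~\ref{thm:Main1} — so that $\exp(-A'u^{(1+p)q})$ decays faster than $\ee^{-u}$ times any power of $u$. Summing, $\widetilde{\mathcal{F}}_{\lambda(E)}(E)-\widetilde{\mathcal{F}}_\infty(E)\le \tfrac{2\pi E}{a}\exp(-2\pi E/a)\,\mathcal{G}_\text{est}(E)$ for all sufficiently large $E$, which is \eqref{eq:G-est-used}.

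The step I expect to be the main obstacle is securing the \emph{exact} exponential rate $\ee^{-u}$ — rather than $\ee^{-(1-\delta)u}$ for some $\delta>0$ — in the core bound $f(u,v)\le Cu\ee^{-u}v^2$, because the right-hand side of \eqref{eq:G-est-used} decays precisely like $\ee^{-2\pi E/a}$ up to a polynomial prefactor and no exponential loss can be absorbed. This is exactly why the core must be cut off at a fixed width $\epsilon$ in $v$ (equivalently at $|\omega|\sim\lambda$): the arguments $u\pm v_\lambda(\omega)$ then stay within a bounded distance of $u$, so the large-$y$ asymptotics of $\psi''$ apply with no degradation of the exponent, while one must still verify that this cutoff is deep enough in the $|\widehat{\chi}|$-tail for the condition $p>q^{-1}-1$ to close the estimate.
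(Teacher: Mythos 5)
Your proof is correct, but it takes a genuinely different route from the paper's. The paper works with $g(u,v)=\ee^{u}f(u,v)$ and splits the $\omega$-integral into $[0,\kappa]$, $[\kappa,E\lambda]$ and $[E\lambda,\infty)$, with further subdivisions at $\omega_0=(a/2\pi)(2\pi E/a)^{(1+p+q^{-1})/2}$ and $\omega_1=E(2\pi E/a)^{s}$; each piece is controlled via the monotonicity of $(u-v)/(\cosh u-\cosh v)$ (Lemma~\ref{LemBound1}) and the strong Fourier decay, and the dominant contribution comes from $[\kappa,\omega_0]$, where the replacement $\sinh\bigl(v/u^{1+p}\bigr)\le\sinh\bigl(v_0/u^{1+p}\bigr)$ produces exactly the power $(2\pi E/a)^{-(1+p-q^{-1})/2}$ appearing in $\mathcal{G}_\text{est}$. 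You instead make a single split at $|\omega|\sim\lambda$ and sharpen the pointwise bound of Lemma~\ref{Lem1} to $f(u,v)\le Cuv^{2}\ee^{-u}$ for $|v|\le\epsilon$ and large $u$ --- a second-order Taylor estimate the paper never derives, since Lemma~\ref{Lem1} only records the $u$-uniform bound $v^{2}/6$. This retains the crucial factor $\ee^{-u}$ on the core, where essentially all of the $|\widehat{\chi}|^{2}$ mass sits, and confines the use of strong Fourier decay (and of the hypothesis $q(1+p)>1$) to the tail. Your resulting bound $O\bigl(u^{-1-2p}\ee^{-u}\bigr)$ on $\widetilde{\mathcal{F}}_{\lambda(E)}(E)-\widetilde{\mathcal{F}}_\infty(E)$ is in fact sharper than the right-hand side of \eqref{eq:G-est-used} --- the $u$-exponents differ by $(3+3p+q^{-1})/2>0$ --- so the stated inequality follows a fortiori for sufficiently large $E$, and your estimate captures the true leading asymptotics of the difference, whereas the paper's $\mathcal{G}_\text{est}$ is a deliberate overestimate tailored to what Theorem~\ref{thm:Main1} needs. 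The only points to tidy are the justification for differentiating the series $\sum_{k\ge1}y\ee^{-ky}$ term by term (routine for $y$ bounded away from zero) and the clash of your $\psi$ with the paper's bump function; neither is a gap.
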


\begin{proof}
Let $E>0$ and let 
\begin{equation}
\mathcal{G}(E) \doteq \frac{\exp(2 \pi E/a)}{2\pi E/a}
\left(\widetilde{\mathcal{F}}_{\lambda(E)}(E) - \widetilde{\mathcal{F}}_{\infty}(E) \right)
\ . 
\label{GMain1}
\end{equation}
We need to show that $\mathcal{G}(E) \le \mathcal{G}_\text{est}(E)$ 
for sufficiently large~$E$. 

Let $g: \mathbb{R}^2 \rightarrow \mathbb{R}$ be defined by 
\begin{align}
g(u,v) & \doteq \ee^u f(u,v) 
\notag
\\[1ex]
&
=
\frac{u + v}{\ee^{v} - \ee^{-u}} 
+ \frac{u - v}{\ee^{- v} - \ee^{-u}} 
- \frac{2 u}{1 - \ee^{-u}}
\ ,
\label{app5:g}
\end{align}
where $f$ was defined in~\eqref{eq:fs-def}, 
and the last formula in \eqref{app5:g} 
is understood in the limiting sense at $u= v$ and $u = -v$. 
Symmetrising the integrands in \eqref{eq:tildeFadiab}
and 
\eqref{eq:tildeFadiab-infty} as in the proof of Lemma~\ref{Lem2}, 
we may decompose \eqref{GMain1} as 
\begin{align}
\mathcal{G}(E) = 
\mathcal{G}_1
+ \mathcal{G}_2
+ \mathcal{G}_3
\ , 
\label{5:Gs-sum}
\end{align}
where 
\begin{subequations}
\label{5:Gs}
\begin{align}
\mathcal{G}_1
& \doteq 
\frac{a}{(2 \pi)^4 (E/a)}  
\int_0^\kappa \! d\omega \, \left| \widehat{\chi}(\omega) \right|^2 
g\bigl(2\pi E/a, 2 \pi \omega/(\lambda a)\bigr)
\ , 
\label{5:Gs-1}
\\
\mathcal{G}_2
& \doteq 
\frac{a}{(2 \pi)^4 (E/a)}  
\int_\kappa^{E \lambda} \! d\omega \, \left| \widehat{\chi}(\omega) 
\right|^2 g\bigl(2\pi E/a, 2 \pi \omega/(\lambda a)\bigr)
\ , 
\label{5:Gs-2}
\\
\mathcal{G}_3
& \doteq \frac{a}{(2 \pi)^4 (E/a)}  
\int_{E \lambda}^\infty \! d\omega \, \left| \widehat{\chi}(\omega) \right|^2 
g\bigl(2\pi E/a, 2 \pi \omega/(\lambda a)\bigr)
\ , 
\label{5:Gs-3}
\end{align}
\end{subequations}
and we have denoted $\lambda(E) = (2\pi E/a)^{1+p}$ 
by just~$\lambda$. 
We may assume $E$ to be so large that $E\lambda > \kappa$. 
From Lemma \ref{Lem1} we then see that 
$\mathcal{G}_1$, 
$\mathcal{G}_2$
and
$\mathcal{G}_3$
are strictly positive. We need to bound $\mathcal{G}_1$, 
$\mathcal{G}_2$
and
$\mathcal{G}_3$ 
from above. 

\subsection*{Bounding $\mathcal{G}_1$}

Consider $\mathcal{G}_1$~\eqref{5:Gs-1}. 
We write $u = 2\pi E/a$ and $v = 2\pi \omega/a$, 
and note that $v\le 2\pi\kappa/a$ and $u > v/u^{1+p}$ since by assumption 
$0\le\omega\le\kappa$ and $E\lambda > \kappa$.  
In the integrand in~\eqref{5:Gs-1}, we then have, using~\eqref{app5:g}, 
\begin{align}
g\bigl(u,v/u^{1+p}\bigr) 
& \leq \frac{u + v/u^{1+p}}{1-\ee^{-u}}  
+\frac{u - v/u^{1+p}}{\ee^{-v/u^{1+p}}
-\ee^{-u}} - \frac{2u}{1-\ee^{-u}} 
\notag
\\
& = \left( u - \frac{v}{u^{1+p}}\right) 
\left(\frac{1}{\ee^{-v/u^{1+p}}-\ee^{-u}} -\frac{1}{1-\ee^{-u}}\right) 
\notag
\\
& \leq u 
\left(\frac{1}{\ee^{-(2 \pi \kappa/a)/ u^{1+p}}-\ee^{-u}} 
-\frac{1}{1-\ee^{-u}}\right)
\notag
\\
& = u \left( 
\frac{2\pi\kappa/a}{u^{1+p}} 
+ O \! \left( u^{-2(1+p)}\right) 
\right)
\notag
\\
& =
\frac{2\pi\kappa/a}{u^{p}}
+ O \! \left( u^{-(1+2p)}\right) 
\ . 
\end{align}
Using this in \eqref{5:Gs-1}, 
and using the evenness of $\left| \widehat{\chi}(\omega) \right|$, 
we find 
\begin{equation}
\mathcal{G}_1 \leq 
\frac{\kappa}{2{(2 \pi)}^2} 
\left[ {(2\pi E/a)}^{-(p+1)} + O \! \left( {(E/a )}^{-2(1+p)} \right) \right]
|| \widehat{\chi} ||^2
\ .
\label{boundpiece1}
\end{equation}

\subsection*{Bounding $\mathcal{G}_2$}

Consider $\mathcal{G}_2$~\eqref{5:Gs-2}. 
Writing again $u = 2\pi E/a$ and $v = 2\pi \omega/a$, 
we now have $2 \pi \kappa/a \le v \le u^{2+p}$. 
In the integrand in~\eqref{5:Gs-2}, we have 
\begin{align}
g\bigl(u,v/(u^{1+p})\bigr) 
& =  
\left( u - \frac{v}{u^{1+p}}\right) 
\left(\frac{1}{\ee^{-v/u^{1+p}}-\ee^{- u}} -\frac{1}{\ee^{v/u^{1+p}} 
-\ee^{- u}}\right) \nonumber \\
& \hspace{2ex}
+ 2u 
\left(\frac{1}{\ee^{v/u^{1+p}} -  \ee^{-u}} - \frac{1}{1- \ee^{-u}} \right)
\notag
\\
& \le 
\left( u - \frac{v}{u^{1+p}}\right) 
\frac{\ee^u \sinh \left(v/u^{1+p}\right)}{\cosh u-\cosh \bigl(v/u^{1+p}\bigr)}
\notag
\\
& \le 
\frac{ u \ee^u \sinh \bigl(v/u^{1+p}\bigr)}{\cosh u-1}
\ , 
\label{gbound}
\end{align}
where the first inequality comes by observing that 
the last term preceding the ``$\le$'' sign 
is negative, and the second inequality follows from 
Lemma~\ref{LemBound1}. From \eqref{5:Gs-2} we hence have 
\begin{align}
\mathcal{G}_2
& \le   \frac{a \, \ee^{2 \pi E/a}}{{(2 \pi)}^3 \left[\cosh(2 \pi E/a)-1 \right]} 
\int_\kappa^{E \lambda} \! d\omega \, \left| \widehat{\chi}(\omega) \right|^2 
\sinh \bigl( 2 \pi \omega/(\lambda a) \bigr)
\ . 
\label{Gest1}
\end{align}
Note that the factor in front of the integral in \eqref{Gest1} is 
$(2a) {(2\pi)}^{-3} \left[1 + O\bigl(\ee^{-2\pi E/a}\bigr) \right]$. 

Let $\omega_0 \doteq (a/2\pi) (2\pi E/a)^{(1 + p + q^{-1})/2 }$. 
Since $p > q^{-1} -1$ by assumption, it follows that 
$\kappa < \omega_0 < E \lambda$
for sufficiently large $E$, and we now assume $E$ to be this large. 
We shall bound separately the contributions to \eqref{Gest1} from 
$\kappa \le \omega \le \omega_0$ and from $\omega_0 \le \omega \le E\lambda$. 

In the contribution from 
$\kappa \le \omega \le \omega_0$, we have 
$\sinh\bigl(2\pi\omega/(\lambda a)\bigr) \le \sinh\bigl(2\pi\omega_0/(\lambda a)\bigr)$, 
and hence 
\begin{align}
\int_\kappa^{\omega_0} \! d\omega \, 
\left| \widehat{\chi}(\omega) \right|^2 
\sinh \bigl( 2 \pi \omega/(\lambda a) \bigr)
& \leq 
\tfrac12 
|| \widehat{\chi} ||^2 
\sinh \bigl( 2 \pi \omega_0/(\lambda a) \bigr)
\nonumber 
\\ 
& = \tfrac12 || \widehat{\chi}||^2 
\sinh \! 
\left( 
(2 \pi E/a)^{-(1+p-q^{-1})/2} 
\right) 
\nonumber 
\\
& = 
\tfrac12 
||\widehat{\chi}||^2 
(2\pi E/a)^{-(1+p-q^{-1})/2} 
\notag
\\
& \hspace{5ex}
\times 
\left[
1 + O \! \left((E/a)^{-(1+p-q^{-1})} \right) 
\right]
\ , 
\end{align} 
using the evenness of $\left| \widehat{\chi}(\omega) \right|$. 

In the contribution from 
$\omega_0 \le \omega \le E\lambda$, 
we use 
$\sinh\bigl(2\pi\omega/(\lambda a)\bigr) \le 
\tfrac12 \exp \bigl(2\pi\omega/(\lambda a)\bigr)$ 
and the strong Fourier decay property~\eqref{eq:Main1-inputest}, obtaining 
\begin{align}
\left| \widehat{\chi}(\omega) \right|^2 \sinh \left( \frac{2 \pi \omega}{\lambda a} \right)  
& \leq 
\frac{C^2}{2 \kappa^{2}} 
\left(B+ \frac{\omega}{\kappa} \right)^{2r} 
\exp \! \left[
-2A \left(\frac{\omega}{\kappa}\right)^q + \frac{2 \pi \omega}{a \lambda} 
\right] 
\nonumber 
\\
& \leq 
\frac{C^2}{2\kappa^{2}} 
\left(B+ \frac{\omega}{\kappa} \right)^{2r} 
\exp \! 
\left[ -2A \left(\frac{\omega_0}{\kappa}\right)^q + \frac{2 \pi E}{a} \right]
\nonumber 
\\
& = \frac{C^2}{2\kappa^{2}} 
\left(B+ \frac{\omega}{\kappa} \right)^{2r} 
\exp \! 
\left[ 
-2A \left(\frac{a}{2 \pi \kappa}\right)^q 
\left(\frac{2 \pi E}{a} \right)^{\left(1 + q( 1 + p) \right)/2} + \frac{2 \pi E}{a} 
\right]
\ .
\label{Gest1-a3}
\end{align}
Since $q(1 + p) > 1$ by assumption, the exponential factor in 
\eqref{Gest1-a3} falls off as $E\to\infty$ faster than any power of $E$, while the integral 
$\int_{\omega_0}^{E\lambda} d\omega \, \left[B + (\omega/\kappa) \right]^{2r}$ 
has only power-law growth in $E$ as $E\to\infty$. Hence the contribution to \eqref{Gest1}
from 
$\omega_0 \le \omega \le E\lambda$ falls off faster than any power as $E\to\infty$. 

Collecting these estimates, we have 
\begin{equation}
\mathcal{G}_2
\le  
\frac{a ||\widehat{\chi}||^2 }{{(2 \pi)}^3} 
\left( \frac{2\pi E}{a}\right)^{-(1+p-q^{-1})/2} 
\left[
1 + O \! \left((E/a)^{-(1+p-q^{-1})} \right) 
\right]
\ .
\label{boundpiece2}
\end{equation}

\subsection*{Bounding $\mathcal{G}_3$}

Consider $\mathcal{G}_3$~\eqref{5:Gs-3}. 
Writing again $u = 2\pi E/a$ and $v = 2\pi \omega/a$, 
we now have $u^{2+p} \le v$.
Using the penultimate expression in \eqref{gbound} 
to bound the integrand in~\eqref{5:Gs-3}, we obtain 
\begin{equation}
\mathcal{G}_{(E \lambda, \infty)} 
\leq 
\frac{\ee^{2 \pi E/a}}{(2 \pi)^3 (E/a)} \int_{E \lambda}^\infty \! d \omega \, 
\left|\widehat{\chi}(\omega)\right|^2 
\left( E - \frac{\omega}{\lambda}\right) 
\frac{\sinh \bigl(2 \pi \omega /(a\lambda)\bigr)}{\cosh (2 \pi E/a) 
-\cosh \bigl(2 \pi \omega/(a \lambda) \bigr)}
\ .
\label{app5:G3}
\end{equation}

Since $q(p+1)>1$ by assumption, we have $1 + p  < q(2 +p) + p$, and we may choose 
$s$ such that $1 + p < s < q(2 +p) + p$. Let $\omega_1 \doteq E  (2 \pi E/a)^s$. 
Assuming $2 \pi E/a > 1$, we then have $E\lambda < \omega_1$. 
In~\eqref{app5:G3}, we denote the contributions from 
$E\lambda \le \omega \le \omega_1$ and $\omega_1 \le \omega < \infty$ by 
respectively $\mathcal{H}_1$ and $\mathcal{H}_2$. 
We shall show that both $\mathcal{H}_1$ and $\mathcal{H}_2$
fall off faster than any power of $E$ as $E\to\infty$. 

For $\mathcal{H}_1$, Lemma \ref{LemBound1} gives 
\begin{align}
\mathcal{H}_1 
\leq 
\frac{a \ee^{2\pi E/a}}{(2 \pi)^4 (E/a) \sinh(2 \pi E/a)} 
\int_{E \lambda}^{\omega_1} \! d \omega \, 
\left|\widehat{\chi}(\omega)\right|^2 
\sinh \bigl(2 \pi \omega /(a\lambda)\bigr)
\ , 
\label{eq:H1}
\end{align}
and for the integrand in \eqref{eq:H1} we may proceed as in \eqref{Gest1-a3}
to obtain
\begin{align}
\left|\widehat{\chi}(\omega)\right|^2 
& \sinh \left(\frac{2 \pi \omega}{a \lambda} \right) 
 \leq 
\frac{C^2}{2 \kappa^{2}} 
\left(B+ \frac{\omega}{\kappa} \right)^{2r} 
\exp \! \left[
-2A \left(\frac{\omega}{\kappa}\right)^q + \frac{2 \pi \omega}{a \lambda} 
\right]
\nonumber 
\\
& \leq 
\frac{C^2}{2 \kappa^{2}} 
\left(B+ \frac{\omega}{\kappa} \right)^{2r} 
\exp \! \left[
-2A \left(\frac{E\lambda}{\kappa}\right)^q + \frac{2 \pi \omega_1}{a \lambda} 
\right]
\nonumber 
\\
& = \frac{C^2}{2\kappa^{2}} \left(B+ \frac{\omega}{\kappa} \right)^{2r} 
\exp \! 
\left[ 
-2A \left(\frac{a}{2 \pi \kappa}\right)^q 
\left(\frac{2 \pi E}{a} \right)^{q(2+p)} 
+ \left(\frac{2 \pi E}{a} \right)^{s-p} 
\right]
\ .
\label{supfastG31}
\end{align}
Since $1 < s-p < q(2+p)$, the exponential factor in 
\eqref{supfastG31} shows that $\mathcal{H}_1$ falls off 
faster than any power of $E$ as $E\to\infty$. 

For $\mathcal{H}_2$, we have 
\begin{align}
\frac{( E - \omega/\lambda) 
\sinh \bigl(2 \pi \omega /(a\lambda)\bigr)}{\cosh (2 \pi E/a) -\cosh \bigl(2 \pi \omega/(a \lambda) \bigr)} 
& =  
\left( \frac{ \omega}{\lambda}
- E \right) 
\tanh\left(\frac{2\pi \omega}{a\lambda}\right)
\left(1-\cfrac{\cosh(2 \pi E/a)}{\cosh\bigl(2 \pi \omega /(a\lambda)\bigr)}\right)^{-1} 
\nonumber 
\\
& \leq  
\frac{ \omega}{\lambda} 
\left(1-\cfrac{\cosh(2 \pi E/a)}{\cosh\bigl(2 \pi \omega_1 /(a\lambda)\bigr)}\right)^{-1} 
\nonumber 
\\
& = 
\frac{ \omega}{(2\pi E/a)^{1+p}} 
\left(1-\cfrac{\cosh(2 \pi E/a)}{\cosh\bigl({(2 \pi E/a)}^{s-p}\bigr)}\right)^{-1}
\nonumber 
\\
& \le 
\frac{2\omega}{(2\pi E/a)^{1+p}} 
\ ,
\label{boundshyp}
\end{align}
where the last inequality holds for sufficiently large $E$ because $s-p>1$. 
For $E$ this large, we hence have 
\begin{equation}
\mathcal{H}_2 
\leq 
\frac{2 \ee^{2 \pi E/a}}{(2 \pi)^2 (2\pi E/a)^{2+p}} 
\int_{\omega_1}^\infty \! d \omega \, \omega 
\left|\widehat{\chi}(\omega)\right|^2 
\ .
\label{app5:H2}
\end{equation}

Since $q(s+1) > 1$, for sufficiently large $E$ we have $A(\omega_1/\kappa)^q \ge 2 \pi E/a$. 
For $E$ this large, the strong Fourier decay property \eqref{eq:Main1-inputest}
gives for the integrand in \eqref{app5:H2} the estimate 
\begin{align}
\omega 
\left|\widehat{\chi}(\omega)\right|^2 
& \le 
\frac{C^2}{\kappa^{2}} \left(B+ \frac{\omega}{\kappa} \right)^{2r} 
\omega 
\exp \! 
\left[ 
-2A \left(\frac{\omega}{\kappa}\right)^q \right]
\notag
\\
& \le 
\frac{C^2}{\kappa^{2}} \left(B+ \frac{\omega}{\kappa} \right)^{2r} 
\omega 
\exp \! 
\left[ 
-A \left(\frac{\omega}{\kappa}\right)^q - A \left(\frac{\omega_1}{\kappa}\right)^q 
\right]
\notag
\\
& \le 
\ee^{-2 \pi E/a}
\frac{C^2}{\kappa^{2}} \left(B+ \frac{\omega}{\kappa} \right)^{2r} 
\omega 
\exp \! 
\left[ 
-A \left(\frac{\omega}{\kappa}\right)^q
\right]
\ . 
\label{boundpiece3}
\end{align}
From 
\eqref{app5:H2}
and 
\eqref{boundpiece3}
we see that 
$\mathcal{H}_2$ falls off faster than any power of $E$ as $E\to\infty$. 

\subsection*{Conclusion of the proof} 

Collecting the power-law estimates \eqref{boundpiece1}
and \eqref{boundpiece2} for $\mathcal{G}_1$ and $\mathcal{G}_2$, 
and the faster than power-law falloff of~$\mathcal{G}_3$, 
we see that the weakest estimate for $\mathcal{G}(E)$ 
is~\eqref{boundpiece2}. 
Hence 
\begin{equation}
\mathcal{G}(E) \le 
\frac{2a ||\widehat{\chi}||^2}{{(2 \pi)}^3} 
\left( \frac{2\pi E}{a}\right)^{-(1+p-q^{-1})/2} 
\label{eq:boundfinal-G}
\end{equation}
for sufficiently large~$E$. 

\end{proof}

\section{Auxiliary result for Theorem \ref{thm:Main2}\label{app:aux-Main2}} 

\begin{lem}
Let $\left(\epsilon_k\right)_{k \in \mathbb{N}}$ 
be a strictly positive sequence such that
\begin{equation}
\sum_{k = 1}^\infty \ee^{-\beta \epsilon_k} < \infty
\label{LemSumFinite}
\end{equation}
with $\beta >0$ and let $\delta_k = C \, \ee^{-\alpha \epsilon_k}$ 
for some $C>0$ and $\alpha > \beta$. 
Define $\epsilon_{-k} = -\epsilon_k$ and $\delta_k = \delta_{-k}$ 
for $k \in \mathbb{N}$, 
and let 
\begin{align}
S = \bigcup_{k \in \mathbb{Z}\setminus 0} 
\left\{\epsilon: \left|\epsilon - \epsilon_k \right| < \delta_k \right\} 
\cup (-\delta_0, \delta_0).
\end{align} 
If $F: \mathbb{R} \rightarrow \mathbb{C}$ is a locally 
integrable polynomially bounded function with
$\supp(F) \subset S$ then the inverse Fourier transform of~$F$, 
\begin{equation}
\mathcal{F}^{-1}[F](z) = \frac{1}{2 \pi} \int_{-\infty}^\infty \! d\epsilon \, 
\ee^{\ii z \epsilon} F(\epsilon)
\ ,
\label{LemFourierExists}
\end{equation}
is analytic in the strip $|\Imagpart(z)| < \alpha - \beta$. 
In particular, $S$ is a modest 
set in the sense introduced in Section~\ref{sec:prelim}.
\label{LemMain2}
\end{lem}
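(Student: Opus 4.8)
The plan is to show that the integral in \eqref{LemFourierExists} converges absolutely and locally uniformly for $z$ in the open strip $|\Imagpart z|<\alpha-\beta$, and then to conclude analyticity by Morera's theorem. Essentially all the content sits in the weighted integrability estimate
\[
\int_S |F(\epsilon)| \, \ee^{\sigma|\epsilon|} \, d\epsilon < \infty \qquad\text{for every } \sigma\in[0,\alpha-\beta),
\]
and once this is in hand the rest is routine. The ingredients used are the polynomial bound $|F(\epsilon)|\le P(|\epsilon|)$ (available since $F$ is polynomially bounded) and the fact that $\sum_k\ee^{-\beta\epsilon_k}<\infty$ with $\beta>0$ forces $\epsilon_k\to\infty$.

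To prove the weighted estimate I would use $\supp F\subset S$ to split the integral into the contribution of the central interval $(-\delta_0,\delta_0)$, which is plainly finite, plus, for each $k\in\BbbZ\setminus\{0\}$, the contribution of $\{|\epsilon-\epsilon_k|<\delta_k\}$. That interval has length $2\delta_k$ and on it $|\epsilon|\le|\epsilon_k|+\delta_k$, so the $k$th contribution is at most $2\delta_k\,\ee^{\sigma(|\epsilon_k|+\delta_k)}\sup_{|t|\le|\epsilon_k|+\delta_k}P(|t|)$. Substituting $\delta_k=C\ee^{-\alpha|\epsilon_k|}$ (and noting $\delta_k\le C$ since $|\epsilon_k|>0$) turns this into $2C\ee^{\sigma C}\,Q(|\epsilon_k|)\,\ee^{-(\alpha-\sigma)|\epsilon_k|}$ for a suitable polynomial $Q$. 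Here I would invoke the strict inequality $\alpha>\beta$: since $\sigma<\alpha-\beta$ the surplus $\alpha-\sigma-\beta>0$ lets one absorb the polynomial, $Q(|\epsilon_k|)\,\ee^{-(\alpha-\sigma)|\epsilon_k|}\le M(\sigma)\,\ee^{-\beta|\epsilon_k|}$ uniformly in $k$, because a polynomial times $\ee^{-(\alpha-\sigma-\beta)|\epsilon_k|}$ is bounded along $|\epsilon_k|\to\infty$. Summing over $k$, with $|\epsilon_{-k}|=\epsilon_k$, bounds the total by a multiple of $\sum_{k\ge1}\ee^{-\beta\epsilon_k}$, which is finite by \eqref{LemSumFinite}.

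Granting the estimate, for $z$ in a compact subset of the strip with $\sigma:=\sup|\Imagpart z|<\alpha-\beta$ one has $|\ee^{\ii z\epsilon}F(\epsilon)|\le|F(\epsilon)|\,\ee^{\sigma|\epsilon|}$, a fixed integrable majorant, so \eqref{LemFourierExists} converges absolutely and uniformly there; since $z\mapsto\ee^{\ii z\epsilon}F(\epsilon)$ is entire for each $\epsilon$, Fubini's theorem gives $\oint_\gamma\mathcal{F}^{-1}[F](z)\,dz=0$ for every closed contour $\gamma$ in the strip, and Morera's theorem yields holomorphicity of $\mathcal{F}^{-1}[F]$ on $|\Imagpart z|<\alpha-\beta$. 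For the last sentence I would note that a bounded locally integrable function is a fortiori polynomially bounded, so the result just proved applies, and since $\alpha>\beta$ the strip $|\Imagpart z|<\alpha-\beta$ is a genuine open neighbourhood of the real axis; as the Fourier and inverse Fourier transforms differ only by $z\mapsto-z$ and a constant factor, $\widehat F$ is holomorphic in the same strip, so $S$ is modest.

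I expect the weighted estimate to be the only genuine obstacle, and within it the single delicate point is the use of $\alpha>\beta$ to swallow the polynomial factor $Q(|\epsilon_k|)$ — which appears precisely because the intervals comprising $S$ sit at distance $\sim|\epsilon_k|$ from the origin while $F$ is only polynomially controlled — into a decaying exponential, so that the sum collapses onto the hypothesis $\sum_k\ee^{-\beta\epsilon_k}<\infty$. The Morera step and the reduction of ``modest'' to polynomial boundedness are standard.
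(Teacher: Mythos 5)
Your proof is correct and follows essentially the same route as the paper: the central estimate — covering $\supp F$ by the intervals of length $2\delta_k$, bounding the $k$th contribution by $2\delta_k\,\ee^{\sigma(\epsilon_k+\delta_k)}$ times a polynomial in $\epsilon_k$, and using the surplus $\alpha-\sigma-\beta>0$ to absorb the polynomial so the sum collapses onto $\sum_k \ee^{-\beta\epsilon_k}$ — is exactly the paper's computation. The only (cosmetic) difference is at the end, where you obtain analyticity via Morera and Fubini while the paper differentiates under the integral sign by dominated convergence; both are standard and equally valid.
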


\begin{proof}
Since $F$ is polynomially bounded, there exist $D > 0$ and $N>0$ such that 
$|F(\epsilon)| \leq D (1 + |\epsilon|^N)$. We therefore have the estimates
\begin{align}
2 \pi \left| \mathcal{F}^{-1}[F](z) \right| 
& \leq 
\int_{-\infty}^\infty \! d\epsilon \, 
\left| \ee^{\ii \epsilon z} F(\epsilon)\right| 
\leq 2 D \int_{0}^\infty \! d\epsilon \, 
\ee^{|\Imagpart(z)| |\epsilon|} \left(1+|\epsilon|^N \right) 
\nonumber 
\\
& \leq 
2 D \sum_{k = 1}^\infty (2 \delta_k ) \ee^{|\Imagpart(z)| 
(\epsilon_k + \delta_k)} \left(1+(\epsilon_k + \delta_k)^N \right) 
\nonumber 
\\
& \leq 
4 C D \sum_{k = 1}^\infty \left(1+(\epsilon_k + \delta_k)^N \right) 
\ee^{|\Imagpart(z)| (\epsilon_k + \delta_k)-\alpha \epsilon_k}.
\label{boundFourier}
\end{align}

From \eqref{LemSumFinite} it follows that 
$\epsilon_k \to \infty$ as $k\to\infty$, and hence 
$\delta_k \to 0$ as $k\to\infty$. 
If $|\Imagpart(z)| < \alpha - \beta$, 
for sufficiently large $k$ we may hence estimate the 
terms in \eqref{boundFourier} by 
\begin{align}
& \left(1 + (\epsilon_k+\delta_k)^N\right)
\ee^{|\Imagpart(z)|(\epsilon_k+\delta_k) - \alpha\epsilon_k}
\notag
\\
& \ \ \le \left(1 + (\epsilon_k+1)^N\right)
\ee^{|\Imagpart(z)|(\epsilon_k+1) - \alpha\epsilon_k}
\notag
\\
& \ \ = \ee^{|\Imagpart(z)|} \left(1 + (\epsilon_k+1)^N\right)
\ee^{ - (\alpha - \beta - |\Imagpart(z)|)\epsilon_k} 
\, 
\ee^{- \beta\epsilon_k}
\notag
\\
& \ \ \le 
\ee^{\alpha - \beta} \, 
\ee^{- \beta\epsilon_k}
\  . 
\end{align}
This shows that $\mathcal{F}^{-1}[F](z)$ exists in the strip 
$|\Imagpart(z)| < \alpha - \beta$. 

To show that $\mathcal{F}^{-1}[F](z)$ is analytic in the strip 
$|\Imagpart(z)| < \alpha - \beta$, we may use the inequality 
\begin{align}
\left|\frac{\ee^{\ii (z+h)\epsilon} 
- \ee^{\ii z\epsilon}}{h} 
- \ii \epsilon \, \ee^{\ii z\epsilon}\right| 
\le
\tfrac12 
|\epsilon| \ee^{(|\Imagpart(z)| + |h| )\epsilon}
\ , 
\end{align}
valid for $h\in\mathbb{C}\setminus\{0\}$, 
together with estimates similar to those above, to provide 
a dominated convergence argument that justifies differentiating 
\eqref{LemFourierExists} under the integral sign, with the outcome 
\begin{align}
\frac{d}{dz} \mathcal{F}^{-1}[F](z) 
= 
\frac{\ii}{2\pi} \int_{-\infty}^\infty d\epsilon \, 
\epsilon \, \ee^{\ii z \epsilon} F(\epsilon)
\ . 
\end{align}
\end{proof}

\end{document}